\algnewcommand{\Inputs}[1]{%
  \State \textbf{Inputs:} #1
}
\theoremstyle{definition}
\newtheorem{lemma}{Lemma}
\newtheorem*{lemma*}{Lemma}
\newtheorem{prop*}{Proposition*}
\newtheorem{definition}[lemma]{Definition}
\newtheorem{definition*}{Definition*}
\newtheorem{note*}{Note*}
\newtheorem{theorem}[lemma]{Theorem}
\newtheorem{theorem*}{Theorem*}
\newtheorem{col*}{Corollary*}
\def\ddiag{\mathop{\rm ddiag}\nolimits}
\def\diag{\mathop{\rm diag}\nolimits}
\def\tr{\mathop{\rm tr}\nolimits}
\title{ 
A Joint Diagonalization Based Efficient Approach to Underdetermined Blind Audio Source Separation Using the Multichannel Wiener Filter
} 
\author{Nobutaka~Ito,~\IEEEmembership{Senior~Member,~IEEE},
~Rintaro~Ikeshita,~\IEEEmembership{Member,~IEEE},%
~Hiroshi~Sawada,
and Tomohiro~Nakatani,~\IEEEmembership{Fellow,~IEEE}
\thanks{N. Ito, R. Ikeshita, H. Sawada, and T. Nakatani are with  Communication Science Laboratories,
NTT Corporation, Kyoto, Japan; email: nobutaka.ito@ieee.org.}
}
\begin{document}
\maketitle

\begin{abstract}
This paper presents a computationally efficient approach to
blind source separation (BSS) of audio signals,
applicable even when there are more sources than microphones (\textbf{\textit{i.e.,}} the underdetermined case).
When there are as many sources as microphones (\textbf{\textit{i.e.,}} the determined case),
BSS can be performed computationally efficiently by
independent component analysis (ICA). Unfortunately, however, ICA
is basically inapplicable to the underdetermined case.
Another BSS approach using the multichannel Wiener filter (MWF)
is applicable even to this case, and encompasses
 full-rank spatial covariance analysis (FCA) and multichannel non-negative matrix factorization (MNMF).
However, these methods require
massive numbers of matrix inversions to design the MWF,
and  are  thus computationally inefficient.
To resolve this challenge,
we exploit
the well-known property of diagonal matrices that matrix inversion amounts to mere inversion of the diagonal elements and can thus be performed computationally efficiently.
This makes it possible to drastically reduce the computational cost of the above matrix inversions 
based on a joint diagonalization (JD) idea, leading to computationally efficient BSS.
Specifically, we restrict the \textbf{\textit{N}} spatial covariance matrices (SCMs) of all \textbf{\textit{N}} sources to
a class of (exactly) jointly diagonalizable matrices.
Based on this approach, we present \textbf{\textit{FastFCA}}, a computationally efficient extension of FCA.
We also present a unified framework for underdetermined and determined audio BSS, which highlights a theoretical connection between FastFCA and other methods. 
Moreover, we reveal that FastFCA can be regarded as a regularized version of approximate joint diagonalization (AJD).
\end{abstract}
\begin{IEEEkeywords}
Blind source separation, joint diagonalization, microphone arrays, multichannel Wiener filter.
\end{IEEEkeywords}

\section{Introduction}


\IEEEPARstart{T}{his} paper deals with blind source separation (BSS) of audio signals, 
which aims to separate source signals from their mixtures recorded by microphones. The applications include automatic speech recognition in a noisy/multi-speaker environment, hearing aids, and music analysis. 

%

BSS approaches can be broadly categorized into single-channel and multichannel approaches, where representative of the former is non-negative matrix factorization (NMF)~\cite{Lee1999NMF,Smaragdis2003,Virtanen2007,Fevotte2009}. In this paper, we focus on the multichannel approach, which can leverage spatial information contained in multichannel data. Another significant distinction is whether training data are utilized or not. Recently, many methods have been proposed based on the deep neural network (DNN) with training data~\cite{Hershey2016,Nugraha2016,Kolbaek2017}. Here we focus on BSS methods that require no training data.

A popular approach to audio BSS is based on independent component analysis (ICA)~\cite{Hyvarinen2001,Cichocki2002}
applicable to the determined case $N=M$,
where
$N$ is the number of sources and $M$ that of microphones.
In ICA, source separation is performed by using a separation matrix,
which can be estimated by exploiting
 statistical independence of the source signals. 
For example, Pham and Cardoso~\cite{Pham2001} proposed {\it time-varying Gaussian ICA}, 
where the matrix is estimated by the maximum 
likelihood method on the assumption that the source signals follow time-varying Gaussian distributions independently.
In audio BSS, ICA typically operates in the short-time Fourier transform (STFT) domain, where individual frequency components are processed independently.
In this case, ICA cannot determine by itself which separated frequency component corresponds to which source, which is called a {\it permutation ambiguity}.
It should be resolved by, {\it e.g.},
post-processing~\cite{Sawada2004TASLP,Sawada2007ISCAS}, so that
 separated frequency components originating from the same source are grouped together.
There also exist permutation-free, full-band extensions of ICA, such as independent vector analysis (IVA)~\cite{Hiroe2006,Kim2007,Ono2012APSIPA} and {\it independent low-rank matrix analysis (ILRMA)}~\cite{Kitamura2016},
where all frequency components are processed jointly. 
These ICA-based methods can perform BSS effectively and computationally efficiently in the determined case, or in the overdetermined case $N<M$ when
combined with dimensionality reduction by principal component analysis (PCA).
However, 
these methods are basically inapplicable to the underdetermined case $N>M$.


Therefore, underdetermined BSS has been recognized as a significant challenge
with
a great deal of effort devoted to it in the past two decades~\cite{Yilmaz2004,Araki2007SigPro,Ito2016EUSIPCO,Cardoso2002,Fevotte2005,Ozerov2010,Duong2010,Yoshii2013,Arberet2010,Sawada2012ICASSP,Sawada2013,Makino2007,Makino2018}.
The promising approaches are based on either
time-frequency masks~\cite{Yilmaz2004,Araki2007SigPro,Ito2016EUSIPCO}
or the MWF~\cite{Cardoso2002,Fevotte2005,Ozerov2010,Duong2010,Yoshii2013,Arberet2010,Sawada2012ICASSP,Sawada2013}.
The former approach assumes 
 that the source signals are sparse 
and thus rarely overlap in the STFT domain.
However, this assumption is not always valid: For example, in music analysis,
instrumental sounds overlap significantly to create rhythm and harmony.
In contrast, the
MWF based approach
is free of the above assumption, and is the focus of this paper.


There have been
many BSS methods proposed
based on the MWF.
Cardoso {\it et al.}~\cite{Cardoso2002,Fevotte2005} proposed an underdetermined extension of  time-varying Gaussian ICA, 
which we call {\it rank-1 spatial covariance analysis (R1CA)}.
This method employs the same
source signal model as time-varying Gaussian ICA,
but estimates a mixing matrix modeling
 the original mixing system  instead of 
the separation matrix modeling the inverse system.
This makes the method applicable even to the underdetermined case,
in which the separation matrix is not well-defined.
Ozerov and F\'evotte~\cite{Ozerov2010} developed a permutation-free, full-band extension of R1CA by
 incorporating NMF as a source signal model.
We call it
{\it rank-1 multichannel NMF (MNMF)}.
Duong, Vincent, and Gribonval
extended R1CA in another direction, yielding what we
call {\it full-rank spatial covariance analysis (FCA)}~\cite{Duong2010,Yoshii2013}.
FCA models the mixing system by {\it full-rank spatial covariance matrices (SCMs)},
whereas R1CA models it by the mixing matrix or, equivalently, rank-1 SCMs.
With the rank-1 constraint on the SCMs removed, FCA enables more flexible signal modeling  than R1CA, leading to empirically more effective BSS under reverberation~\cite{Duong2010}.
There is also a full-band extension of 
FCA, which we call {\it full-rank MNMF}~\cite{Arberet2010,Sawada2012ICASSP,Sawada2013}.
The above methods~\cite{Cardoso2002,Fevotte2005,Ozerov2010,Duong2010,Yoshii2013,Arberet2010,Sawada2012ICASSP,Sawada2013} share the drawback of computational inefficiency,
which is caused by
massive numbers of matrix inversions required for designing the MWF.
This has constituted a major obstacle to applying this approach widely in the real world,
and developing its computationally efficient extension has long been an extremely important but difficult challenge.



\subsection{Contributions of This Paper}
The contributions of this paper are three-fold:
\begin{enumerate}
\item FastFCA for efficient underdetermined BSS,
\item a unified BSS framework,
\item a relation to approximate joint diagonalization (AJD).
\end{enumerate}
Although 1) has also appeared in our preliminary work~\cite{Ito2018IWAENC,Ito2018GlobalSIP}, 
here we give a complete and definitive account of it. 
2) and 3) constitute the main novelty of this paper compared to~\cite{Ito2018IWAENC,Ito2018GlobalSIP}.
We explain each contribution in the following.

\subsubsection{FastFCA for Efficient Underdetermined BSS}
We present a computationally efficient approach to underdetermined audio BSS using the MWF.
To resolve the above challenge,
we exploit
the well-known property of diagonal matrices that matrix inversion amounts to mere inversion of the diagonal elements and can thus be performed computationally efficiently.
This makes it possible to drastically reduce the computational cost of the above matrix inversions 
based on a joint diagonalization idea, leading to computationally efficient BSS.
Specifically, we impose on the $N$ SCMs of all $N$ sources
 the constraint that they are jointly diagonalizable, which we call a  {\it joint diagonalizability (JD) constraint.}
Based on this approach, we present {\textit{FastFCA}},  a computationally efficient extension of FCA.
Interestingly, the FastFCA  cost turns out to be a `mixture' of ICA and NMF costs. 
Based on this observation, we present
 efficient optimization algorithms for FastFCA, which leverage efficient optimization algorithms that have been developed
for  ICA and NMF.
The computational efficiency of our approach makes it suitable for large data ({\textit{e.g.}}, data augmentation for machine learning) 
or limited computational resources encountered in, {\it{e.g.,}} hearing aids, distributed microphone arrays, and online processing.

\subsubsection{Unified BSS Framework}
We also present a unified framework for underdetermined and determined audio BSS, which
highlights a close theoretical connection between FastFCA and other methods.
We show that these methods share the same likelihood function, and mainly differ
in the way the source covariance matrices are parametrized. Furthermore, as shown in Fig.~\ref{fig:framework},
the connection between methods can be represented by a cube, where
the eight vertices represent eight methods and
the three axes three constraints imposed on the source covariance matrices.

\begin{figure}
\centering
\includegraphics[width=\columnwidth]{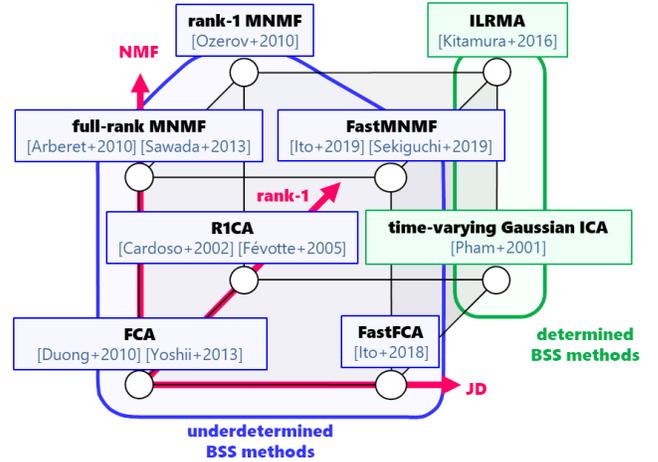}\\
\caption{A unified BSS framework, encompassing  six underdetermined and two determined BSS methods. It is represented by a cube, where the eight vertices represent the eight methods and the three axes three constraints imposed on the covariance matrices of the source images.
}
\label{fig:framework}
\end{figure}

\subsubsection{Relation to AJD}
A concept similar to, but different from, the JD constraint is approximate joint diagonalization (AJD)~\cite{Flury1986,Cardoso1993,Belouchrani1997,Pham2001,Veen2001,Yeredor2002,Alyani2017}. It seeks for a single matrix that jointly diagonalizes a given set of matrices as well as possible. 
In the statistical literature, Flury and Gautschi~\cite{Flury1986} defined the AJD problem for the first time, motivated by their study of common principal component analysis~\cite{Flury1984}. Here, 
deviation from diagonality was measured by using the log-determinant divergence\footnote{The log-determinant divergence
is also known as Stein's loss and a Burg matrix divergence. It is 
 an instance of the matrix Bregman divergence.}~\cite{James1961,Kulis2009},
a matrix extension of
 the Itakura-Saito divergence.
 Cardoso and Souloumiac~\cite{Cardoso1993} formulated separation matrix estimation in ICA as AJD of fourth-order cumulant matrices, which was called JADE. They used a least squares criterion. 
Belouchrani {\it et al.}~\cite{Belouchrani1997} showed another application of AJD to ICA, called SOBI, where they used lagged sample covariance
matrices with different lags instead of the fourth-order cumulant matrices. 
Pham and Cardoso~\cite{Pham2001} showed that
a piecewise stationary extension of time-varying Gaussian ICA
boils down to AJD of short-term sample
covariance matrices in different time intervals (hereafter called {\it Pham's AJD}). A problem related to AJD, called subspace fitting, was studied in~\cite{Veen2001,Yeredor2002}.
Here we reveal that FastFCA can be regarded as a regularized version of Pham's AJD
 with the regularization term being
the cost of NMF using the
 Itakura-Saito divergence~\cite{Fevotte2009}.

\subsection{Paper Structure}
The rest of this paper is structured as follows. 
Section~\ref{sec:form} formulates our BSS problem.
Section~\ref{sec:FCA} reviews FCA.
FastFCA is presented in Section~\ref{sec:prop}.
Section~\ref{sec:relICA} describes the unified framework, and 
Section~\ref{sec:conajd} the connection to AJD.
Section~\ref{sec:exp} is devoted to experimental evaluation, and
Section~\ref{sec:conc} concludes the paper.

%

\section{BSS Problem Formulation}
\label{sec:form}

\begin{figure}
\centering
\includegraphics[width=\columnwidth]{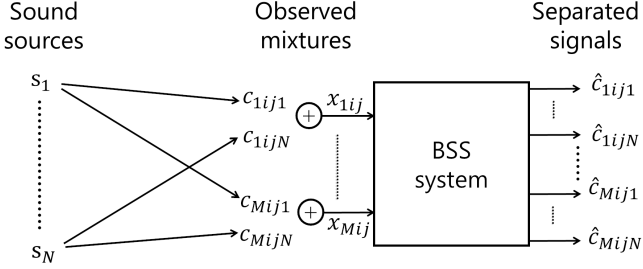}\\
\caption{Signal notations.}
\label{fig:notations}
\end{figure}

Let us formulate our BSS problem.
As in Fig.~\ref{fig:notations}, suppose $N\ (\geq 2)$ source signals emanating from $N$ sound sources are mixed, and observed by $M\ (\geq 2)$ microphones.
$N$ is assumed to be given and
 possibly larger than $M$.
We consider frequency-domain BSS, where the mixtures are transformed from the time to the frequency domains by
STFT. We denote the microphone index by $m=1,\dots,M$,
the frequency bin index by $i=1,\dots,I$,
the frame index by $j=1,\dots,J$,
and the source index by $n=1,\dots,N$.
The mixture observed by the $m$th microphone, ${x}_{mij}\in\mathbb{C}$, is modeled by the sum of all $N$ source signals as
\begin{align}
{x}_{mij}=\sum_{n=1}^N{c}_{mijn},\label{eq:scalarform}
\end{align}
where $c_{mijn}$ denotes the unknown contribution of the $n$th source signal to the $m$th microphone.
We can rewrite (\ref{eq:scalarform}) in vector form as
\begin{align}
{\mathbf{x}}_{ij}=\sum_{n=1}^N{\mathbf{c}}_{ijn},\label{eq:summodel}
\end{align}
where
\begin{align}
&{\mathbf{x}}_{ij}\coloneqq\begin{pmatrix}
{x}_{1ij}\\
 \vdots\\ {x}_{Mij}\end{pmatrix}\in\mathbb{C}^M,
&{\mathbf{c}}_{ijn}\coloneqq
\begin{pmatrix}{c}_{1ijn}\\ \vdots\\ {c}_{Mijn}\end{pmatrix}\in\mathbb{C}^M
\end{align}
are
$M$-dimensional vectors with elements corresponding to all $M$ microphones. We call ${\mathbf{c}}_{ijn}$ a {\it source image}.
Our BSS problem is that of estimating 
 ${\mathbf{c}}_{1:I,1:J,1:N}$ given ${\mathbf{x}}_{1:I,1:J}$ and $N$,
where $1:J$ (or the like) is a shorthand notation for $1,\ldots,J$ (or the like).

\section{Review: Full-rank Spatial Covariance Analysis (FCA)}
\label{sec:FCA}
This section reviews FCA~\cite{Duong2010}, an underdetermined BSS method based on the MWF.
Since FCA processes individual frequency components independently, we omit the frequency bin index $i$ for brevity in this section.

\subsection{Stochastic Signal Model}
FCA assumes that the $N$ source images independently follow zero-mean proper complex Gaussian distributions
\begin{align}
p(\mathbf{c}_{jn})\label{eq:srcstomodel}
&=\mathcal{N}_c(\mathbf{c}_{jn}\mid\mathbf{0},h_{jn}\mathbf{R}_{n}).
\end{align}
Here, $h_{jn}>0$ models the power spectrum of the $n$th source signal,
 $\mathbf{R}_n\in S_{++}^M$ is a full-rank   SCM modeling the mixing system,
$
\mathcal{N}_c(\bm{\alpha}\mid\bm{\mu},\bm{\Omega})
\coloneqq \exp[-(\bm{\alpha}-\bm{\mu})^H\bm{\Omega}^{-1}(\bm{\alpha}-\bm{\mu})]/\det(\pi\bm{\Omega})
$
denotes the probability density function of the proper complex Gaussian distribution
with mean $\bm{\mu}$ and covariance matrix $\bm{\Omega}$, 
$S_{++}^M$ the set of $M$th-order Hermitian positive definite matrices, and 
${}^H$ Hermitian transposition.
From the above assumption of source independence,
\begin{align}
\mathbb{E}[\mathbf{c}_{jn}\mathbf{c}_{j\nu}^H]=\mathbf{O}\ (n\neq \nu),\label{eq:srccross}
\end{align}
where $\mathbb{E}$ denotes expectation and $\mathbf{O}$ the zero matrix of the appropriate size.
Consequently, the mixtures have a zero-mean proper complex Gaussian distribution again:
\begin{align}
p(\mathbf{x}_{j}\mid\Theta)
&=\mathcal{N}_c(\mathbf{x}_{j}\mid\mathbf{0},\mathbf{X}_j(\Theta)),\label{eq:obsden}
\end{align}
where $\mathbf{X}_j(\Theta)\coloneqq\sum_{n=1}^Nh_{jn}\mathbf{R}_{n}$ is the observation covariance matrix and
$\Theta\coloneqq\{\mathbf{R}_{1:N},h_{1:J,1:N}\}$ the set of model parameters.

\subsection{Multichannel Wiener Filter}
In FCA, source separation is performed by using the MWF, applicable even in the underdetermined case.
The MWF is a multichannel extension of the well-known (single-channel) Wiener filter. 
Consider an estimator of ${\mathbf{c}}_{jn}$ of form
$
\widehat{\mathbf{c}}_{jn}=\mathbf{F}_{jn}\mathbf{x}_j,
$ which is a linear function of $\mathbf{x}_j$. Here, the filter coefficient matrix $\mathbf{F}_{jn}\in\mathbb{C}^{M\times M}$ is assumed to be deterministic. Different choices of $\mathbf{F}_{jn}$ lead to different estimators, but we aim to find the one that minimizes the mean square error (MSE) between the unknown source image $\mathbf{c}_{jn}$ and its estimate $\widehat{\mathbf{c}}_{jn}$, given by $MSE(\mathbf{F}_{jn})\coloneqq \mathbb{E}[\|\widehat{\mathbf{c}}_{jn}-\mathbf{c}_{jn}\|_2^2]$.
On the assumption that $\mathbb{E}\bigl[\mathbf{x}_j\mathbf{x}_j^H\bigr]$ is nonsingular, the unique minimizer is given by 
$
\arg\min_\mathbf{F} MSE(\mathbf{F})=\mathbf{F}_{jn}^{MWF}\coloneqq \mathbb{E}[\mathbf{c}_{jn}\mathbf{x}_j^H]\mathbb{E}[\mathbf{x}_j\mathbf{x}_j^H]^{-1}
$,
where $\mathbb{E}[\mathbf{c}_{jn}\mathbf{x}_j^H]$ is the cross-correlation matrix of $\mathbf{c}_{jn}$ and $\mathbf{x}_j$ and $\mathbb{E}[\mathbf{x}_j\mathbf{x}_j^H]$ the auto-correlation matrix of $\mathbf{x}_j$. This optimal solution is called the MWF, 
and the resulting estimator the linear minimum mean square error (LMMSE) estimator.
By using (\ref{eq:summodel}), (\ref{eq:srcstomodel}), and (\ref{eq:srccross}), the solution is rewritten as
\begin{align}
\mathbf{F}_{jn}^{MWF}=h_{jn}\mathbf{R}_{n}\mathbf{X}_j(\Theta)^{-1}.\label{eq:mwf2}
\end{align}
To design (\ref{eq:mwf2}), 
we need to estimate 
 $\Theta$ from the mixtures.

\subsection{Source Parameter Estimation}

The parameters $\Theta$ can be estimated by the maximum likelihood (ML) method, where the following negative log-likelihood is minimized:
\begin{align}
\mathcal{J}(\Theta)&\coloneqq -\ln p(\mathbf{x}_{1:J}\mid\Theta)\\
&=-\sum_{j=1}^J\ln p(\mathbf{x}_j\mid\Theta)\\
&\overset{c}{=}\sum_{j=1}^J[\ln\det\mathbf{X}_j(\Theta)+\mathbf{x}_j^H\mathbf{X}_j(\Theta)^{-1}\mathbf{x}_j],\label{eq:negllFCA}
\end{align}
where we have assumed sample independence and $\overset{c}{=}$ denotes equality up to a constant. This optimization can be performed by the expectation-maximization (EM) algorithm~\cite{Duong2010} or
the majorization-minimization (MM) algorithm~\cite{Sawada2012ICASSP,Yoshii2013,Sawada2013}. 

The EM algorithm is a generic optimization method applicable to the ML method. It
 uses auxiliary variables  called {\it latent variables} (denoted by $Z$) and a surrogate function called a {\it Q-function} (denoted by $Q$). The algorithm alternates the following two steps:
\begin{itemize}\itemsep 0pt
\item \textbf{E-step: }Update the posterior distribution  $p(Z\mid X,\Theta^\prime)$ of $Z$ given the data $X$ based on the parameter estimates $\Theta^\prime$ at the previous iteration. 
\item \textbf{M-step: }Update the parameters $\Theta$ so that the Q-function 
\begin{align}
Q(\Theta\mid\Theta^\prime)\coloneqq \int \ln p(X,Z\mid\Theta)p(Z\mid X,\Theta^\prime)dZ
\end{align}
does not decrease: $Q(\Theta\mid\Theta^\prime)\geq Q(\Theta^\prime\mid\Theta^\prime)$. 
\end{itemize}
The negative log-likelihood is guaranteed to be nonincreasing at each iteration. Moreover, there is no need for tuning a step size unlike in gradient methods.

In our case, $X=\mathbf{x}_{1:J}$ and $Z=\mathbf{c}_{1:J,1:N-1}$. Note that we exclude the $N$th source image from $Z$, which may seem a bit tricky.
The resulting algorithm consists in alternating the following update rules (see Appendix A for derivation):
\begin{align}
\bm{\Psi}_{jn}&\leftarrow \mathbf{F}_{jn}^{MWF}\widehat{\mathbf{X}}_j\bigl(\mathbf{F}_{jn}^{MWF}\bigr)^H
+\bigl(\mathbf{I}-\mathbf{F}_{jn}^{MWF}\bigr)(h_{jn}\mathbf{R}_n),\label{eq:psi}
\end{align}
\begin{align}
&h_{jn}\leftarrow \frac{1}{M}\tr(\mathbf{R}_{n}^{-1}\bm{\Psi}_{jn}),
&\mathbf{R}_{n}\leftarrow \frac{1}{J}\sum_{j=1}^J \frac{1}{h_{jn}}{\bm{\Psi}_{jn}}.\label{eq:FCAEM2}
\end{align}
Here,  $\mathbf{I}$ denotes the $M$th-order identity matrix, and 
\begin{align}
\widehat{\mathbf{X}}_j\coloneqq \mathbf{x}_j\mathbf{x}_j^H.\label{eq:Xhatdef}
\end{align}

Another useful optimization technique is applicable to FCA, namely a majorization-minimization (MM) algorithm. Let us consider minimizing a cost $\mathcal{J}(\Theta)$. The MM algorithm uses an auxiliary function $\mathcal{J}^+(\Theta,\Xi)$, which satisfies 
$\min_{\Xi}\mathcal{J}^+(\Theta,\Xi)=\mathcal{J}(\Theta)$.
That is, the minimum value of $\mathcal{J}^+$ with respect to $\Xi$ coincides with $\mathcal{J}$. Here, $\Xi$ is called auxiliary variables. The MM algorithm uses $\mathcal{J}^+$ to update $\Xi$ and $\Theta$ alternately. $\Xi$ is updated with $\Theta$ fixed by
$\Xi\leftarrow\arg\min_{\Xi}\mathcal{J}^+(\Theta,\Xi)$.
$\Theta$ is updated with $\Xi$ fixed so that $\mathcal{J}^+$  does not increase:
$\mathcal{J}^+(\Theta,\Xi)\leq \mathcal{J}^+(\Theta^\prime,\Xi)$ with $\Theta^\prime$ denoting the parameter estimates at the previous iteration.
Fig.~\ref{fig:MM} shows that $\mathcal{J}$ is guaranteed to be nonincreasing. Obviously, $\mathcal{J}^+$ is nonincreasing at each update of $\Xi$ or $\Theta$. Moreover, after each update of $\Xi$, the value of $\mathcal{J}^+$ coincides with that of $\mathcal{J}$. Therefore, $\mathcal{J}$ is guaranteed to be nonincreasing at each update of $\Xi$ or $\Theta$.
The EM algorithm can be regarded as a variant of the MM algorithm, where 
$\Xi$ consists of functions and $\mathcal{J}^+$ is a functional of $\Xi$.
\begin{figure}
\centering
\includegraphics[width=\columnwidth]{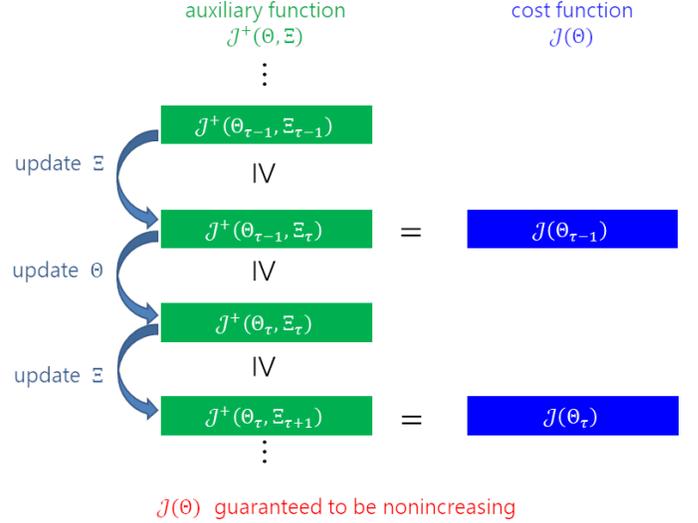}\\
\caption{MM algorithm. $\tau$ is the iteration index.}
\label{fig:MM}
\end{figure}

The resulting FCA algorithm consists in alternating the following update rules (see Appendix B for derivation):
\begin{align}
\mathbf{X}_j\leftarrow\sum_{n=1}^N h_{jn}\mathbf{R}_n,\label{eq:updateXMM}
\end{align}
\begin{align}
&h_{jn}\leftarrow
h_{jn}\sqrt{\frac{\tr\bigl(\mathbf{X}_j^{-1}\widehat{\mathbf{X}}_j\mathbf{X}_j^{-1}\mathbf{R}_n\bigr)}{\tr\bigl(\mathbf{X}_j^{-1}\mathbf{R}_n\bigr)}},
\end{align}
\begin{align}
\mathbf{R}_n\leftarrow
\Biggl(\sum_{j=1}^Jh_{jn}\mathbf{X}_j^{-1}\Biggr)^{-1}\#\Biggl[
\mathbf{R}_n\sum_{j=1}^J{h_{jn}}\mathbf{X}_j^{-1}\widehat{\mathbf{X}}_j\mathbf{X}_j^{-1}
\mathbf{R}_n
\Biggr].\label{eq:updateRMM}
\end{align}
Here, $\bm{\Omega}_1\#\bm{\Omega}_2\coloneqq \bm{\Omega}_1^\frac{1}{2}(\bm{\Omega}_1^{-\frac{1}{2}}\bm{\Omega}_2\bm{\Omega}_1^{-\frac{1}{2}})^\frac{1}{2}\bm{\Omega}_1^\frac{1}{2}\in S_{++}^M$ denotes the geometric mean\footnote{$\bm{\Omega}_1\#\bm{\Omega}_2$ is the unique positive definite solution $\mathbf{X}$ of an algebraic Riccati equation $\mathbf{X}\bm{\Omega}_1^{-1}\mathbf{X}=\bm{\Omega}_2$.
Geometrically,
$\bm{\Omega}_1\#\bm{\Omega}_2$ can be interpreted as the midpoint of the geodesic connecting $\bm{\Omega}_1$ and $\bm{\Omega}_2$.} of $\bm{\Omega}_1,\bm{\Omega}_2\in S_{++}^M$~\cite{Pedersen1972,Pusz1975,Kubo1980}.
For $\bm{\Omega}\in S_{++}^M$ and $r\in\mathbb{R}$, $\bm{\Omega}^r\coloneqq \mathbf{U}\diag(\sigma_1^r,\ldots,\sigma_M^r)\mathbf{U}^H$, where $\mathbf{U}\diag(\sigma_1,\ldots,\sigma_M)\mathbf{U}^H$ ($\mathbf{U}$: unitary) is an eigenvalue decomposition of  $\bm{\Omega}$ and $\diag(\sigma_1,\ldots,\sigma_M)$ denotes the diagonal matrix composed of $\sigma_1,\ldots,\sigma_M$.
$\bm{\Omega}_1\#\bm{\Omega}_2$ is a matrix extension of the geometric mean $\omega_1\#\omega_2=(\omega_1\omega_2)^\frac{1}{2}$ of two positive numbers $\omega_1,\omega_2$.
Indeed, it is straightforward to check that the former reduces to the latter, when $M=1$.

\subsection{Drawback}
\label{sec:drawback}
Although applicable even to the underdetermined case, FCA has the drawback of being computationally expensive because of massive numbers of matrix inversions required in the EM/MM algorithm. Indeed, matrix inversion is required at each time-frequency point and each EM/MM iteration. We can see this by looking at (\ref{eq:psi}), where $\mathbf{X}_j(\Theta)$ in $\mathbf{F}_{jn}^{MWF}=h_{jn}\mathbf{R}_{n}\mathbf{X}_j(\Theta)^{-1}$ depends on both the time and the frequency (recall that the frequency bin index $i$ is being omitted). Since the number of time-frequency points is normally huge, this results in numerous matrix inversions. For example, consider a typical case of 8-s mixtures sampled at 16\,kHz and STFT with a frame length of 64\,ms and a shift of 32\,ms. This results in as much as 127,737 time-frequency points.

\section{FastFCA: Computationally Efficient Extension of FCA}
\label{sec:prop}


This section describes
FastFCA, a computationally efficient extension of FCA.

\subsection{Motivation}
\label{sec:approach}

To overcome the above drawback of numerous matrix inversions in FCA, 
we exploit
the well-known property of diagonal matrices that matrix inversion amounts to mere inversion of diagonal elements,
which can be performed efficiently.
This implies that, if the $N$ SCMs $\mathbf{R}_{1:N}$ of all $N$ sources were diagonal, the matrix inversions in FCA could be performed efficiently. Unfortunately, however, they are far from being diagonal in practice, because of significant signal correlation between microphones.

This motivates us to consider joint diagonalization (by congruence) of $\mathbf{R}_{1:N}$:
\begin{align}
\begin{cases}
\mathbf{W}^H\mathbf{R}_{1}\mathbf{W}=\bm{\Lambda}_{1},\\
\ \vdots\\
\mathbf{W}^H\mathbf{R}_{N}\mathbf{W}=\bm{\Lambda}_{N}.
\end{cases}\label{eq:jd}
\end{align}
Here, $\mathbf{W}\in GL(M)$ denotes a nonsingular matrix (unknown), and 
$\bm{\Lambda}_1,\ldots,\bm{\Lambda}_N\in S_{++}^{M}$ diagonal matrices (unknown), where $GL(M)$ denotes the set of $M$th-order complex nonsingular matrices.
When there are only two sources ({\it i.e.}, $N=2$), we can obtain such $\mathbf{W}$ and $\bm{\Lambda}_{1:2}$
 by 
solving a generalized eigenvalue problem (see~\cite{Horn1990}).
We exploited this property to make FCA computationally efficient when $N=2$~\cite{Ito2018EUSIPCOFastFCA}, which is an earlier idea of FastFCA.
However, when there are more than two sources ({\it i.e.}, $N\geq 3$), such $\mathbf{W}$ and $\bm{\Lambda}_{1:N}$ may not exist, depending on the values of $\mathbf{R}_{1:N}$.

\subsection{Approach: Joint Diagonalizability (JD) Constraint}

To make FCA computationally efficient for an arbitrary number of sources, FastFCA restricts $\mathbf{R}_{1:N}$ to jointly diagonalizable matrices. 
In other words, it imposes on $\mathbf{R}_{1:N}$ the constraint that there exist a nonsingular matrix $\mathbf{W}\in GL(M)$ and diagonal matrices $\bm{\Lambda}_{1},\ldots,\bm{\Lambda}_N\in S_{++}^M$ satisfying (\ref{eq:jd}), which we call a {\it joint diagonalizability (JD) constraint}.
In this case, $\mathbf{R}_{1:N}$ can be parametrized by the matrices $\mathbf{W}$ and 
$\bm{\Lambda}_{1:N}$
as 
\begin{align}
\mathbf{R}_{n}=\mathbf{W}^{-H}\bm{\Lambda}_{n}\mathbf{W}^{-1},\label{eq:parametrizationFastFCA}
\end{align}
where 
$\mathbf{W}^{-H}\coloneqq (\mathbf{W}^H)^{-1}=(\mathbf{W}^{-1})^H$.
Here, $\mathbf{W}$ and $\bm{\Lambda}_{1:N}$ are considered as model parameters to be estimated.
Note that $\mathbf{W}$ is not restricted to unitary matrices unlike in NMF with transform learning~\cite{Fagot2018}. 

\subsection{Signal Model in FastFCA}
\label{sec:model}
In FastFCA, the mixtures are modeled by (\ref{eq:summodel}), and each source image by (\ref{eq:srcstomodel}) as in FCA.
The difference from FCA lies in 
that  $\mathbf{R}_n$ is not a free parameter here but
parametrized as in (\ref{eq:parametrizationFastFCA}).
Consequently,  (\ref{eq:srcstomodel}) becomes
\begin{align}
p(\mathbf{c}_{jn}\mid\Theta)
=\mathcal{N}_c
(\mathbf{c}_{jn}\mid\mathbf{0},h_{jn}\mathbf{W}^{-H}\bm{\Lambda}_{n}\mathbf{W}^{-1}),\label{eq:FastFCAstosigmodel}
\end{align}
where $\Theta$ is now defined as $\Theta\coloneqq \{\mathbf{W},\mathbf{\Lambda}_{1:N},h_{1:J,1:N}\}$.
Thus, $\mathbf{W}^{H}\mathbf{c}_{jn}$ has a distribution 
\begin{align}
p(\mathbf{W}^{H}\mathbf{c}_{jn}\mid\Theta)
=\mathcal{N}_c
(\mathbf{W}^{H}\mathbf{c}_{jn}\mid\mathbf{0},h_{jn}\bm{\Lambda}_{n}).\label{eq:covimgfast}
\end{align}
Since $h_{jn}\bm{\Lambda}_{n}$ is diagonal, (\ref{eq:covimgfast}) implies that the elements of $\mathbf{W}^{H}\mathbf{c}_{jn}$ are mutually uncorrelated and thus mutually independent as well because of Gaussianity.
Define
\begin{align}
\mathbf{y}_j\coloneqq \mathbf{W}^H\mathbf{x}_j=\sum_{n=1}^N\mathbf{W}^H\mathbf{c}_{jn}.\label{eq:decormix}
\end{align}
From (\ref{eq:covimgfast}) and (\ref{eq:decormix}), $\mathbf{y}_j$ has a distribution 
\begin{align}
p(\mathbf{y}_{j}\mid\Theta)
=\mathcal{N}_c
(\mathbf{y}_{j}\mid\mathbf{0},\mathbf{Y}_j),\label{eq:FastFCAstosigmodel2}
\end{align}
with $\mathbf{Y}_j$ being a diagonal covariance matrix 
\begin{align}
\mathbf{Y}_j\coloneqq\sum_{n=1}^Nh_{jn}\bm{\Lambda}_{n}.\label{eq:Ydef}
\end{align}
Again, this implies that the elements of $\mathbf{y}_j$ are mutually uncorrelated as well as mutually independent. In this sense,
we call $\mathbf{W}^H$ a {\it decorrelation matrix}, and  
$\mathbf{y}_j$ {\it decorrelated mixtures}.


\subsection{Cost}
In the following, we summarize our assumptions in FastFCA:
\begin{enumerate}
\item \textbf{Linear transform:} As in (\ref{eq:decormix}), the decorrelated mixtures $\mathbf{y}_j$ equal the mixtures $\mathbf{x}_j$ transformed by the decorrelation matrix $\mathbf{W}^H$. Thus, their distributions are related by 
\begin{align}
p(\mathbf{x}_j\mid\Theta)=|\det\mathbf{W}|^2p(\mathbf{y}_j\mid\Theta).\label{eq:as1}
\end{align}
\item \textbf{Output independence:} As in (\ref{eq:FastFCAstosigmodel2}), the decorrelated mixtures $\mathbf{y}_j=(y_{1j},\ldots,y_{Mj})^T$ ($^T$: transposition) are independent:
\begin{align}
p(\mathbf{y}_j\mid\Theta)=\prod_{m=1}^Mp(y_{mj}\mid\Theta).
\end{align}
Here, the marginal distribution of each decorrelated mixture  is given by 
\begin{align}
p(y_{mj}\mid\Theta)=\mathcal{N}_{\mathbb{C}}(y_{mj}\mid 0,\sigma_{mj}^2)\label{eq:margindecor}
\end{align}
with 
\begin{align}
\sigma_{mj}^2\coloneqq [\mathbf{Y}_j]_{mm}=\sum_{n=1}^Nh_{jn}[\bm{\Lambda}_n]_{mm}.\label{eq:sigmadef}
\end{align}
Here,
 $[\bm{\Lambda}_n]_{mm}$ denotes the $(m,m)$-element of $\bm{\Lambda}_n$. 
\item \textbf{Sample independence:} The samples $\mathbf{x}_{1:J}$ are mutually independent, so that
\begin{align}
p(\mathbf{x}_{1:J}\mid\Theta)=\prod_{j=1}^Jp(\mathbf{x}_{j}\mid\Theta).\label{eq:as3}
\end{align}
\end{enumerate}

By putting the above assumptions together, we obtain the negative log-likelihood to be minimized as follows:
\begin{align}
&-\ln p(\mathbf{x}_{1:J}\mid\Theta)\notag\\
&=-J\ln|\det\mathbf{W}|^2
-\sum_{m=1}^M\sum_{j=1}^J\ln p(y_{mj}\mid\Theta)\\
&\overset{c}{=}-J\ln|\det\mathbf{W}|^2+\sum_{m=1}^M\sum_{j=1}^J\frac{|y_{mj}|^2}{\sigma^2_{mj}}
+\sum_{m=1}^M\sum_{j=1}^J\ln\sigma^2_{mj}.\label{eq:negll}
\end{align}

\subsection{Parameter Estimation}
Interestingly,  the FastFCA cost (\ref{eq:negll}) can be viewed as a `mixture' of ICA and NMF costs, in a similar way as in~\cite{Kitamura2016}.
On the one hand, the first two terms in  (\ref{eq:negll}), relevant to updating $\mathbf{W}$, turn out to be the cost of time-varying Gaussian ICA\footnote{To be precise, here we refer to a
complex extension of time-varying Gaussian ICA~\cite{Pham2001}, which originally operates on real signals.}~\cite{Pham2001}, up to a constant independent of $\mathbf{W}$.
Note here that $y_{mj}$ depends on $\mathbf{W}$ as $y_{mj}=\mathbf{w}_m^H\mathbf{x}_j$ with $\mathbf{W}=(\mathbf{w}_1,\ldots,\mathbf{w}_M)$.
On the other hand, the last two terms, relevant to updating $\bm{\Lambda}_{n}$ and $h_{jn}$, turn out to be
the cost of NMF using the Itakura-Saito divergence (IS-NMF)~\cite{Fevotte2009}, up to a constant independent of $\bm{\Lambda}_{n}$ and $h_{jn}$.
Consequently, we can minimize  (\ref{eq:negll}) efficiently
by leveraging efficient optimization algorithms 
that have been developed for ICA and NMF\footnote{The following are some further implications of
this interpretation of the FastFCA cost.
(\ref{eq:negll})
 reduces to the cost of
time-varying Gaussian ICA, when $N=M$ and $\bm{\Lambda}_n$ is fixed to $\mathbf{e}_n\mathbf{e}_n^T$.
Moreover, (\ref{eq:negll})
 reduces to the 
IS-NMF cost, when $\mathbf{W}$ is fixed to the identity matrix $\mathbf{I}$.
As such, FastFCA can be viewed as an extension of both time-varying Gaussian ICA and IS-NMF.
}.

Consider updating
$\mathbf{W}$ with $\mathbf{\Lambda}_{1:N}$ and $h_{1:J,1:N}$ fixed based on ICA.
For example, we can use ICA algorithms, such as natural gradient~\cite{Amari1995} and  
iterative projection (IP)~\cite{Ono2011WASPAA,Ono2012APSIPA}. Here we focus on the latter,
which has the advantages that (\ref{eq:negll}) is guaranteed to be
nonincreasing and that
 it is free of hyperparameters, such as the step size, often requiring fine tuning.
IP is based on block coordinate descent, where one column of $\mathbf{W}$ is updated at a time, instead of the whole matrix $\mathbf{W}$. The stationary condition for the $m$th column
$\mathbf{w}_m$ is given by
\begin{align}
\mathbf{W}^H\mathbf{Q}_m\mathbf{w}_m=\mathbf{e}_m,\label{eq:stationary}
\end{align}
where $\mathbf{Q}_m$ is a weighted covariance matrix given by
\begin{align}
\mathbf{Q}_m\coloneqq \frac{1}{J}\sum_{j=1}^J\frac{1}{\sigma_{mj}^2}\widehat{\mathbf{X}}_j\label{eq:wsampcov}
\end{align}
and  $\mathbf{e}_m$ denotes the $m$th column of $\mathbf{I}=(\mathbf{e}_1,\ldots,\mathbf{e}_M)$. 
A solution to
(\ref{eq:stationary}) can be obtained by the following two-step procedure\footnote{We originally used a slight variant of this procedure for FastFCA in~\cite{Ito2018IWAENC} and for its full-band extension called FastMNMF (see Section~\ref{sec:fastmnmfmain}) in~\cite{Ito2019ICASSP}, where 
$\mathbf{w}_m$ was normalized by
$\mathbf{w}_m\leftarrow \mathbf{w}_m/[\mathbf{w}_m]_m$ ($[\mathbf{w}_m]_m$: the $m$th element of $\mathbf{w}_m$) instead of (\ref{eq:ip2}) (see \cite{Ito2019ICASSP}).
However, it was unclear whether 
the negative log-likelihood is guaranteed to be nonincreasing in this case.
Consequently, 
 (\ref{eq:ip2}) was used for FastMNMF in~\cite{Sekiguchi2020}, where the negative log-likelihood is guaranteed to be nonincreasing.}:
\begin{align}
\mathbf{w}_m&\leftarrow(\mathbf{W}^{H}\mathbf{Q}_{m})^{-1}\mathbf{e}_m,\label{eq:ip1}\\
\mathbf{w}_m&\leftarrow\frac{\mathbf{w}_m}{\sqrt{\mathbf{w}_m^H\mathbf{Q}_{m}\mathbf{w}_m}}.\label{eq:ip2}\end{align}
The matrix $\mathbf{W}$ can be updated by updating $\mathbf{w}_m$ by the above procedure
for all $m=1,\ldots,M$.
 
Now consider updating
$\mathbf{\Lambda}_{1:N}$ and $h_{1:J,1:N}$ with $\mathbf{W}$ fixed based on NMF.
In this case, the last two terms of (\ref{eq:negll}) are relevant, which are rewritten as
\begin{align}
\sum_{m=1}^M\sum_{j=1}^J D_{IS}(|y_{mj}|^2\mid\sigma^2_{mj})\label{eq:neglllast2}
\end{align}
up to a constant independent of $\mathbf{\Lambda}_{1:N}$ and $h_{1:J,1:N}$.
Here,
$D_{IS}(\omega_1\mid\omega_2)\coloneqq \omega_1\omega_2^{-1}-\ln(\omega_1\omega_2^{-1})-1$ denotes the
Itakura-Saito divergence of $\omega_1,\omega_2>0$.
Define a non-negative matrix $\mathbf{U}$ by
\begin{align}
[\mathbf{U}]_{mj}\coloneqq |y_{mj}|^2=\mathbf{w}_m^H\widehat{\mathbf{X}}_j\mathbf{w}_m,\label{eq:Udef} 
\end{align}
and non-negative matrices
$\mathbf{L}\in\mathbb{R}_{++}^{M\times N}$ and $\mathbf{H}\in\mathbb{R}_{++}^{N\times J}$ by
\begin{align}
&[\mathbf{L}]_{mn}\coloneqq [\bm{\Lambda}_n]_{mm},
&[\mathbf{H}]_{nj}\coloneqq h_{jn}\label{eq:defLH}
\end{align}
with $\mathbb{R}_{++}$ being the set of positive numbers. Since (\ref{eq:sigmadef}) becomes
$\sigma_{mj}^2=[\mathbf{LH}]_{mj}$,
(\ref{eq:neglllast2}) can be rewritten as $\sum_{m=1}^M\sum_{j=1}^J D_{IS}([\mathbf{U}]_{mj}\mid[\mathbf{LH}]_{mj})$,
which is nothing but the cost of NMF\footnote{Note that this is NMF in the channel-time~\cite{Togami2010,Chiba2014} rather than the time-frequency domain~\cite{Smaragdis2003,Virtanen2007}.
} $\mathbf{U}\approx\mathbf{LH}$ using the Itakura-Saito divergence (IS-NMF)~\cite{Fevotte2009}.
Consequently, we can update 
$\mathbf{\Lambda}_{1:N}$ and $h_{1:J,1:N}$, or equivalently $\mathbf{L}$ and $\mathbf{H}$, by using
 an EM~\cite{Fevotte2009}
or an MM algorithm~\cite{Nakano2010} for NMF, where (\ref{eq:negll}) is guaranteed to be
nonincreasing.

Therefore, our optimization scheme for minimizing (\ref{eq:negll}) consists in alternating the following two steps:
\begin{itemize}\itemsep 0pt
\itemsep 0pt
\item Update $\mathbf{W}$ with $\mathbf{L}$ and $\mathbf{H}$ fixed based on IP.
\item Update $\mathbf{L}$ and $\mathbf{H}$ with $\mathbf{W}$ fixed based on an EM or an MM algorithm for NMF.
\end{itemize}
Algorithm~\ref{algo:FastFCA} shows the pseudocode of an {\it IP+EM} algorithm,
which uses
 the EM algorithm to update $\mathbf{L}$ and $\mathbf{H}$.
Algorithm~\ref{algo:FastFCAMM} shows the pseudocode of an {\it IP+MM} algorithm,
which uses 
the MM algorithm
 to update $\mathbf{L}$ and $\mathbf{H}$.
$\odot$ denotes element-wise multiplication, 
$\circ$ element-wise exponentiation,
$\bm{1}_{M\times J}$ the matrix of all ones of size $M\times J$,
$\displaystyle[\mathbf{P}]_{:,n}$ the $n$th column of a matrix $\mathbf{P}$,
and $\displaystyle[\mathbf{P}]_{n,:}$ the $n$th row of a matrix $\mathbf{P}$.

\begin{algorithm}
\caption{IP+EM algorithm for FastFCA}
\label{algo:FastFCA}
\hspace*{\algorithmicindent}\textbf{Input:} $\widehat{\mathbf{X}}_{1:J}, N$\\
\hspace*{\algorithmicindent}\textbf{Output:} $\mathbf{W}$, $\mathbf{L}$, $\mathbf{H}$
\begin{algorithmic}[1]
\State Initialize  $\mathbf{W}$, $\mathbf{L}$, and $\mathbf{H}$
\Repeat
\For{$m=1:M$}
\State $\mathbf{Q}_{m}\leftarrow\frac{1}{J}\sum_{j=1}^J\frac{1}{[\mathbf{LH}]_{mj}}\widehat{\mathbf{X}}_j$
\State Update $\mathbf{w}_m$ by (\ref{eq:ip1}) and (\ref{eq:ip2})
\EndFor
\State Update $\mathbf{U}$ by $[\mathbf{U}]_{mj}\leftarrow\mathbf{w}_m^H\widehat{\mathbf{X}}_{j}\mathbf{w}_m$
\For{$n=1:N$}\vspace{1mm}
\State $\displaystyle\mathbf{G}_n\leftarrow ([\mathbf{L}]_{:,n}[\mathbf{H}]_{n,:})\odot(\mathbf{L}\mathbf{H})^{\circ -1}$\vspace{1mm}
\State $\displaystyle {\bm{\Phi}}_{n}\leftarrow 
(\mathbf{G}_n)^{\circ 2}\odot \mathbf{U}+(\mathbf{1}_{M\times J}-\mathbf{G}_n)\odot
([\mathbf{L}]_{:,n}[\mathbf{H}]_{n,:})$\vspace{1mm}
\State $\displaystyle[\mathbf{L}]_{:,n}\leftarrow\frac{1}{J}{\bm{\Phi}}_{n}\Bigl[(\mathbf{H}^T)^{\circ -1}\Bigr]_{:,n}$\vspace{1mm}
\State $\displaystyle[\mathbf{H}]_{n,:}\leftarrow\frac{1}{M}\Bigl[(\mathbf{L}^{T})^{\circ -1}\Bigr]_{n,:}{\bm{\Phi}}_{n}$\vspace{1mm}
\EndFor
\Until{some stop condition is met}
\end{algorithmic}
\end{algorithm}

\begin{algorithm}
\caption{IP+MM algorithm for FastFCA}
\label{algo:FastFCAMM}
\hspace*{\algorithmicindent}\textbf{Input:} $\widehat{\mathbf{X}}_{1:J}, N$\\
\hspace*{\algorithmicindent}\textbf{Output:} $\mathbf{W}$, $\mathbf{L}$, $\mathbf{H}$
\begin{algorithmic}[1]
\State Initialize  $\mathbf{W}$, $\mathbf{L}$, and $\mathbf{H}$
\Repeat
\For{$m=1:M$}
\State $\mathbf{Q}_{m}\leftarrow\frac{1}{J}\sum_{j=1}^J\frac{1}{[\mathbf{LH}]_{mj}}\widehat{\mathbf{X}}_j$
\State Update $\mathbf{w}_m$ by (\ref{eq:ip1}) and (\ref{eq:ip2})
\EndFor
\State Update $\mathbf{U}$ by $[\mathbf{U}]_{mj}\leftarrow\mathbf{w}_m^H\widehat{\mathbf{X}}_{j}\mathbf{w}_m$
\State $\displaystyle\mathbf{L}\leftarrow\mathbf{L}\odot\Biggl\{{\frac{[{\mathbf{U}}\odot(\mathbf{L}\mathbf{H})^{\circ -2}]\mathbf{H}^T}{(\mathbf{L}\mathbf{H})^{\circ -1}\mathbf{H}^T}}\Biggr\}^{\circ\frac{1}{2}}$\vspace{1mm}
\State $\displaystyle\mathbf{H}\leftarrow\mathbf{H}\odot\Biggl\{{\frac{\mathbf{L}^T[{\mathbf{U}}\odot(\mathbf{L}\mathbf{H})^{\circ -2}]}{\mathbf{L}^T(\mathbf{L}\mathbf{H})^{\circ -1}}}\Biggr\}^{\circ\frac{1}{2}}$
\Until{some stop condition is met}
\end{algorithmic}
\end{algorithm}

\subsection{Multichannel Wiener Filter}
\label{sec:MWFFastFCA}

Once the parameters have been estimated, the source images can be estimated by the MWF. Here we show that the filter has an interesting form in the case of FastFCA. By plugging (\ref{eq:parametrizationFastFCA}) in (\ref{eq:mwf2}), we have
\begin{align}
&\mathbf{F}_{jn}^{MWF}\notag\\
&=\mathbf{W}^{-H}\begin{pmatrix}
\frac{h_{jn}[\bm{\Lambda}_{n}]_{11}}{\sum_{\nu=1}^N h_{j\nu}[\bm{\Lambda}_{\nu}]_{11}}&&\mathbf{O}\\
&\ddots&\\
\mathbf{O}&&\frac{h_{jn}[\bm{\Lambda}_{n}]_{MM}}{\sum_{\nu=1}^N h_{j\nu}[\bm{\Lambda}_{\nu}]_{MM}}\end{pmatrix}\mathbf{W}^{H}.\label{eq:Doclo}
\end{align}
Therefore, as in Fig.~\ref{fig:decompmwf}, the MWF  decomposes into three parts: 1) decorrelation, 2) single-channel Wiener filters in the decorrelation domain, and 3)
 the inverse transform (projection back). Note that (\ref{eq:Doclo}) is a generalization of a result in \cite{Doclo2002}.
(They consider the case $N=2, h_{jn}=1$ in  the context of denoising.)
%
%

\begin{figure}
\centering
\includegraphics[width=\columnwidth]{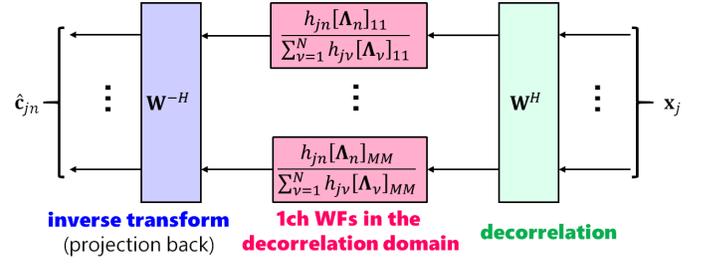}\\
\caption{Decomposition of the MWF for FastFCA.}
\label{fig:decompmwf}
\end{figure}

\subsection{FastMNMF: Full-band Extension of FastFCA}
\label{sec:fastmnmfmain}
We can extend FastFCA to a full-band BSS method by incorporating NMF as a source signal model.
The resulting method is called {\it FastMNMF}~\cite{Ito2019ICASSP,Sekiguchi2020}\footnote{We first published the 
FastMNMF model 
 in a conference paper~\cite{Ito2019ICASSP}, and so did Sekiguchi {\it et al.}~\cite{Sekiguchi2019} slightly later and independently.}, because it can also be considered as a computationally
efficient extension of full-rank MNMF~\cite{Arberet2010,Sawada2012ICASSP,Sawada2013}.
While FastFCA processes each frequency component independently, 
FastMNMF processes all frequency components jointly, whereby
enabling permutation-free BSS. See Appendix~\ref{sec:fastmnmf} for more details.

\section{A Unified Framework for Underdetermined and Determined Audio BSS}
\label{sec:relICA}

\subsection{Description of Unified Framework}

This section presents a unified framework for underdetermined and determined audio BSS, which
highlights a close theoretical connection between FastFCA and other methods.
In this section, we make the frequency bin index $i$ explicit.

As shown in Fig.~\ref{fig:param}, various methods share 
the negative log-likelihood of the same functional form.
As shown in Fig.~\ref{fig:framework}, the difference between these methods lies in whether or not three constraints (represented by the three axes) are imposed on model parameters: 
\begin{itemize}\itemsep 0pt
\item the joint diagonalizability (JD) constraint on $\mathbf{R}_{in}$, 
\item a rank-1 constraint on $\mathbf{R}_{in}$, 
\item an NMF constraint on $h_{ijn}$.
\end{itemize}
In FCA, none of these constraints is imposed on $\mathbf{R}_{in}$ and $h_{ijn}$.
In R1CA, the rank-1 constraint is imposed on
$\mathbf{R}_{in}$.
 In FastFCA, the JD constraint is imposed on $\mathbf{R}_{in}$.
As shown in Section~\ref{sec:tvgderiv},
in time-varying Gaussian ICA, 
both the JD and the rank-1 constraints are imposed on $\mathbf{R}_{in}$,
 leading to the parametrization 
\begin{align}
\mathbf{R}_{in}=\mathbf{W}_i^{-H}\mathbf{e}_n\mathbf{e}_n^T\mathbf{W}_i^{-1}\label{eq:ICAparam}
\end{align}
in Fig.~\ref{fig:param}. 
The restriction to the determined case is also shown to be an immediate consequence of the simultaneous presence of both constraints (see Section~\ref{sec:tvgderiv}).
(\ref{eq:ICAparam}) will be derived in the next paragraph.
The other methods are obtained by imposing the NMF constraint on $h_{ijn}$.

Now we derive the parametrization (\ref{eq:ICAparam}) in time-varying Gaussian ICA.
This method is applicable to the determined case $N=M$,
where the signal model is given by $\mathbf{x}_{ij}=\mathbf{A}_i\mathbf{y}_{ij}=\mathbf{W}^{-H}_i\mathbf{y}_{ij}$.
Here, 
$\mathbf{A}_i\in GL(M)$ denotes
the mixing matrix,
$\mathbf{W}^H_i\coloneqq \mathbf{A}^{-1}_i$ the separation matrix,
and $\mathbf{y}_{ij}\coloneqq
(y_{ij1},\ldots,y_{ijN})^T\in\mathbb{C}^N$ the source signals.
We assume that the source signals independently follow 
zero-mean proper complex Gaussian distributions with time-varying variances $h_{ijn}$ as
$
p(\mathbf{y}_{ij}\mid\Theta)=\prod_{n=1}^N\mathcal{N}_c(y_{ijn}\mid {0},h_{ijn})=\mathcal{N}_c(\mathbf{y}_{ij}\mid \mathbf{0},\sum_{n=1}^Nh_{ijn}\mathbf{e}_n\mathbf{e}_n^T)$
with $\Theta\coloneqq \{\mathbf{W}_{1:I},h_{1:I,1:J,1:N}
\}$.
The distribution of the mixtures is therefore given by
$
p(\mathbf{x}_{ij}\mid\Theta)
=\mathcal{N}_c(\mathbf{x}_{ij}\mid \mathbf{0},\sum_{n=1}^Nh_{ijn}\mathbf{R}_{in})$ with $\mathbf{R}_{in}$ given by (\ref{eq:ICAparam}).

%
\begin{figure}
\centering
\includegraphics[width=\columnwidth]{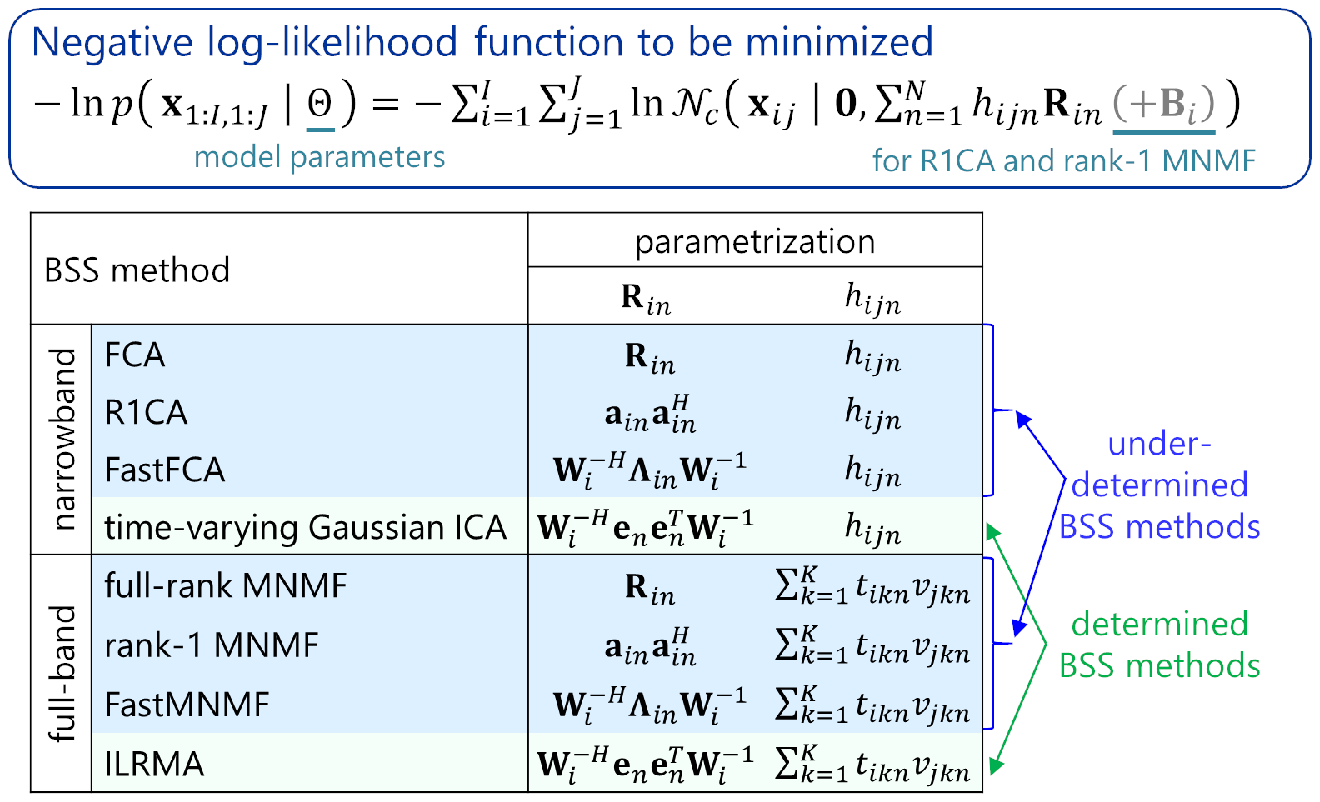}\\
\caption{Unified framework: parametrization. $\mathbf{B}_{i}$ denotes the noise covariance matrix for R1CA and rank-1 MNMF.}
\label{fig:param}
\end{figure}

\subsection{Properties of BSS Methods}
\begin{figure}
\centering
\includegraphics[width=\columnwidth]{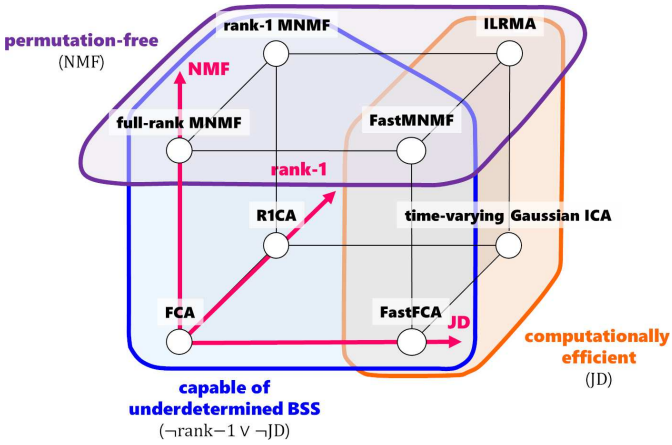}\\
\caption{Various properties of BSS methods.}
\label{fig:properties}
\end{figure}

As shown in Fig.~\ref{fig:properties},
various properties of BSS methods can be explained by using the framework in Fig.~\ref{fig:framework}. As we have just described, when both the JD and the rank-1 constraints are imposed on $\mathbf{R}_{in}$, the determined condition is automatically derived. This implies that underdetermined BSS is possible only when either the rank-1 or the JD constraint is absent as shown in Fig.~\ref{fig:properties}.
Moreover, computationally efficient BSS is possible when the JD constraint is present.
Furthermore, permutation-free BSS is possible when the NMF constraint is present.

These properties highlight the advantage of FastFCA and its full-band extension, namely FastMNMF, over the other methods. Time-varying Gaussian ICA is computationally efficient because of the JD constraint but incapable of underdetermined BSS because of the simultaneous presence of the JD and the rank-1 constraints. On the other hand, R1CA and FCA are capable of underdetermined BSS by removing the JD constraint, but this is at the cost of computational efficiency. In contrast, FastFCA is not only capable of underdetermined BSS but also computationally efficient by removing the rank-1 constraint but retaining the JD constraint. Similar statements hold for the full-band BSS methods.

Here, an interesting question arises: Can 
 R1CA (respectively rank-1 MNMF) be made faster based on the JD constraint, just as FCA (respectively full-rank MNMF)? The answer is yes.
The resulting method is nothing but time-varying Gaussian ICA (respectively ILRMA). We can consider
time-varying Gaussian ICA to be `FastR1CA'
(respectively ILRMA to be `Fast rank-1 MNMF').
However, time-varying Gaussian ICA (respectively ILRMA) is restricted to the determined case unlike FastFCA (respectively FastMNMF).

\subsection{Mathematical Background}
\label{sec:tvgderiv}
In the following, we show that the following statements hold under certain mild conditions:
\begin{itemize}\itemsep 0pt
\itemsep 0pt
\item Imposing
 both the JD and the rank-1 constraints on $\mathbf{R}_{in}$ leads to the parametrization (\ref{eq:ICAparam}) in time-varying Gaussian ICA.
\item The restriction to the determined case in time-varying Gaussian ICA is an immediate linear algebraic consequence of the simultaneous presence of both constraints. 
\item Under (\ref{eq:ICAparam}), the MWF reduces to the separation matrix followed by projection back, commonly used in ICA.
\end{itemize}

First of all, we have Theorem~\ref{thm:shrink} below:
\begin{definition}
Let $M,N$ be positive integers.
$\mathbf{R}_1,\ldots,\mathbf{R}_N\in
\mathbb{C}^{M\times M}$ are said to be jointly diagonalizable, if $\mathbf{W}^H\mathbf{R}_n\mathbf{W}$ is diagonal for all $n\in\{1,\ldots,N\}$ for some $\mathbf{W}\in GL(M)$.
\label{def:jd}
\end{definition}
\begin{theorem}
Let $M,N$ be positive integers satisfying $M<N$, and $\mathbf{a}_1,\ldots,\mathbf{a}_N\in\mathbb{C}^M-\{\mathbf{0}\}.$
If $\mathbf{a}_1\mathbf{a}_1^H,\ldots,\mathbf{a}_N\mathbf{a}_N^H$ are jointly diagonalizable,
$\mathbf{a}_n\propto\mathbf{a}_{n^\prime}$ for some
distinct $n$ and $n^\prime$.
\label{thm:shrink}
\end{theorem}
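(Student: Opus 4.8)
The plan is to transport the problem through the congruence matrix $\mathbf{W}$ and reduce it to a counting argument. Since $\mathbf{a}_1\mathbf{a}_1^H,\ldots,\mathbf{a}_N\mathbf{a}_N^H$ are jointly diagonalizable, Definition~\ref{def:jd} gives a $\mathbf{W}\in GL(M)$ with $\mathbf{W}^H\mathbf{a}_n\mathbf{a}_n^H\mathbf{W}$ diagonal for every $n$. Writing $\mathbf{b}_n\coloneqq\mathbf{W}^H\mathbf{a}_n$, this matrix equals $\mathbf{b}_n\mathbf{b}_n^H$, so the hypothesis says precisely that each rank-one matrix $\mathbf{b}_n\mathbf{b}_n^H$ is diagonal. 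First I would establish the key elementary fact that a matrix of the form $\mathbf{b}\mathbf{b}^H$ is diagonal if and only if $\mathbf{b}$ has at most one nonzero entry: its $(k,l)$-entry is $b_k\overline{b_l}$, which vanishes off the diagonal exactly when no two coordinates of $\mathbf{b}$ are simultaneously nonzero.

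Next I would use the nonsingularity of $\mathbf{W}$ together with $\mathbf{a}_n\neq\mathbf{0}$ to conclude that $\mathbf{b}_n=\mathbf{W}^H\mathbf{a}_n\neq\mathbf{0}$, so by the previous fact $\mathbf{b}_n$ has \emph{exactly} one nonzero entry; that is, $\mathbf{b}_n=\beta_n\mathbf{e}_{k_n}$ for some scalar $\beta_n\neq 0$ and some index $k_n\in\{1,\ldots,M\}$. This assigns to each index $n\in\{1,\ldots,N\}$ a value $k_n$ lying in a set of size $M$. Since $N>M$, the pigeonhole principle yields distinct $n,n'$ with $k_n=k_{n'}$, whence $\mathbf{b}_n$ and $\mathbf{b}_{n'}$ are both nonzero multiples of the same basis vector $\mathbf{e}_{k_n}$, and so $\mathbf{b}_n\propto\mathbf{b}_{n'}$. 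Finally I would undo the transform: applying the invertible map $\mathbf{W}^{-H}$ to the relation $\mathbf{b}_n\propto\mathbf{b}_{n'}$ gives $\mathbf{a}_n\propto\mathbf{a}_{n'}$, as required.

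The argument is short, and the only genuine insight is the characterization of diagonal rank-one matrices; once that is in hand, the congruence reduction and the pigeonhole step are routine. I do not anticipate a real obstacle, though some care is needed to record that each $\mathbf{b}_n$ has exactly (not merely at most) one nonzero coordinate, since it is this refinement—forcing each $n$ to occupy a single pigeonhole $k_n$—that makes the counting valid and rules out the degenerate possibility that would otherwise slip through when some $\mathbf{a}_n$ could map to $\mathbf{0}$.
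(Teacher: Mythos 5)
Your proof is correct and follows essentially the same route as the paper's: both push the vectors through the congruence matrix, observe that a nonzero diagonal rank-one matrix must be a positive multiple of some $\mathbf{e}_k\mathbf{e}_k^T$ (equivalently, that $\mathbf{b}_n$ has exactly one nonzero entry), and then apply the pigeonhole principle to the resulting map $n\mapsto k_n$ since $N>M$. The paper phrases the counting step as non-injectivity of the map $\varphi$, but the argument is identical.
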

\begin{proof}
Since $\mathbf{a}_1\mathbf{a}_1^H,\ldots,\mathbf{a}_N\mathbf{a}_N^H$ are jointly diagonalizable,
there exist $\mathbf{W}\in GL(M)$ such that, for all $n\in\{1,\ldots,N\}$,
 $\mathbf{W}^H\mathbf{a}_n\mathbf{a}_n^H\mathbf{W}$ is diagonal.
Moreover, since  $\mathbf{W}^H\mathbf{a}_n\mathbf{a}_n^H\mathbf{W}$ is Hermitian positive semidefinite and has rank 1,
there exist 
$\lambda_1,\ldots,\lambda_N>0$
and $\varphi: \{1,\ldots,N\}\rightarrow \{1,\ldots,M\}$
such that, for all $n\in\{1,\ldots,N\}$,
$
\mathbf{W}^H\mathbf{a}_n\mathbf{a}_n^H\mathbf{W}=\lambda_n\mathbf{e}_{\varphi(n)}\mathbf{e}_{\varphi(n)}^T.
$
Thus, there exist 
$\theta_1,\ldots,\theta_N\in\mathbb{R}$
such that, for all $n\in\{1,\ldots,N\}$, $\mathbf{W}^H\mathbf{a}_n=\sqrt{\lambda_n}\exp(\sqrt{-1}\theta_n)\mathbf{e}_{\varphi(n)}$.
Since $M<N$, $\varphi$ is not injective. Therefore,
there exist  distinct 
$n$ and $n^\prime$ such that
$\mathbf{a}_n=\sqrt{\lambda_{n}/\lambda_{n^\prime}}\exp\bigl[\sqrt{-1}(\theta_{n}-\theta_{n^\prime})\bigr]\mathbf{a}_{n^\prime}$.
\end{proof}

Now, let us assume that the covariance matrix of the mixtures is modeled by
$\sum_{n=1}^Nh_{ijn}\mathbf{R}_{in}$ as in Fig.~\ref{fig:param}, and that
both the JD and the rank-1 constraints are imposed on $\mathbf{R}_{in}$.\footnote{It is also possible to model the covariance matrix of the mixtures by $\sum_{n=1}^Nh_{ijn}\mathbf{a}_{in}\mathbf{a}_{in}^H+\mathbf{B}_i$ with $\mathbf{B}_i$ being the noise covariance matrix and to assume that $\mathbf{a}_{i1}\mathbf{a}_{i1}^H,\ldots,\mathbf{a}_{iN}\mathbf{a}_{iN}^H,\mathbf{B}_i$ are jointly diagonalizable. We do not discuss this case any further in this paper, and 
 include it in the future work.}
Let us consider the following cases:
\begin{itemize}\itemsep 0pt
\itemsep 0pt
\item Case $N>M$: In this case, Theorem~\ref{thm:shrink} implies $\mathbf{R}_{in}\propto\mathbf{R}_{in^\prime}$ for some distinct $n$ and $n^\prime$.
\item Case $N<M$: In this case,
$\sum_{n=1}^Nh_{ijn}\mathbf{R}_{in}$ is rank-deficient,
and so
 the negative log-likelihood is not well-defined.
\end{itemize}
Since both cases must be avoided, $N=M$. 
Now, let us write $\mathbf{R}_{in}=\mathbf{a}_{in}\mathbf{a}_{in}^H$.
Let us define a separation matrix
 $\mathbf{W}_i^H\coloneqq (\mathbf{a}_{i1},\ldots,\mathbf{a}_{iN})^{-1}$ (we assume that the mixing matrix $(\mathbf{a}_{i1},\ldots,\mathbf{a}_{iN})$ is nonsingular).
Then, $\mathbf{a}_{in}=\mathbf{W}_i^{-H}\mathbf{e}_n$, yielding
$\mathbf{R}_{in}=\mathbf{W}_i^{-H}\mathbf{e}_n\mathbf{e}_n^T\mathbf{W}_i^{-1}$,
the parametrization in time-varying Gaussian ICA in Fig.~\ref{fig:param}.
By plugging this in (\ref{eq:mwf2}), we have
\begin{align}
&\mathbf{F}_{ijn}^{MWF}\notag\\
&=\underbrace{\mathbf{W}^{-H}_i}_{\text{projection back}}\ \ \underbrace{\mathbf{e}_{n}\mathbf{e}_{n}^T}_{\text{remove all but the $n$th separated signal}}\ \ \underbrace{\mathbf{W}^{H}_i.}_{\text{separation matrix}}
\end{align}

\section{Connection with Approximate Joint Diagonalization (AJD)}
\label{sec:conajd}

This section reveals a close connection between FastFCA and 
 approximate joint diagonalization (AJD)~\cite{Flury1986,Cardoso1993,Belouchrani1997,Pham2001,Veen2001,Yeredor2002,Alyani2017}.

\subsection{Review of AJD}
\label{sec:AJD}



As we alluded to earlier, for any two matrices $\mathbf{Q}_1,\mathbf{Q}_2\in S_{++}^M$,
there exists $\mathbf{W}\in GL(M)$ such that $\mathbf{W}^H\mathbf{Q}_1\mathbf{W},\mathbf{W}^H\mathbf{Q}_2\mathbf{W}$
are {\it exactly} jointly diagonal, which is available as a solution to a generalized eigenvalue problem~\cite{Horn1990}.
However, this is not the case for general matrices $\mathbf{Q}_1,\ldots,\mathbf{Q}_J\in \mathbb{C}^{M\times M}$,
which is why AJD comes into play.
Given $\mathbf{Q}_1,\ldots,\mathbf{Q}_J\in \mathbb{C}^{M\times M}$, 
AJD seeks for $\mathbf{W}\in GL(M)$ such that
 $\mathbf{W}^H\mathbf{Q}_1\mathbf{W},\ldots,\mathbf{W}^H\mathbf{Q}_J\mathbf{W}$
are {\it as jointly diagonal as possible}.

%


Here we focus on Flury's AJD cost~\cite{Flury1986} based on the log-determinant divergence, 
which is closely related to FastFCA.
It is given by
\begin{align}
&\sum_{j=1}^J
\alpha_jD_{LD}\bigl(\mathbf{W}^H\mathbf{Q}_j\mathbf{W}\mid\ddiag\bigl(\mathbf{W}^H\mathbf{Q}_j\mathbf{W}\bigr)\bigr)\label{eq:AJDLD}\\
&=\sum_{j=1}^J
\alpha_j\ln\frac{\det\bigl(\ddiag\bigl(\mathbf{W}^H\mathbf{Q}_j\mathbf{W}\bigr)\bigr)}{\det\bigl(\mathbf{W}^H\mathbf{Q}_j\mathbf{W}\bigr)},
\end{align}
where $\mathbf{Q}_1,\ldots,\mathbf{Q}_J\in S^M_{++}$,
$\alpha_j>0$ is a weight, and $\ddiag(\cdot)$ nullifies the off-diagonal elements.
Although Flury {\it et al.}~\cite{Flury1986} restricted $\mathbf{W}$ to orthogonal matrices,
we drop this restriction as in~\cite{Pham2001}.
$D_{LD}$ is
the log-determinant divergence defined for
 $\bm{\Omega}_1, \bm{\Omega}_2\in S^M_{++}$ 
by
$
D_{LD}(\bm{\Omega}_1\mid\bm{\Omega}_2)\coloneqq \tr\bigl(\bm{\Omega}_1\bm{\Omega}_2^{-1}\bigr)-\ln\det\bigl(\bm{\Omega}_1\bm{\Omega}_2^{-1}\bigr)-M.
$
It is 
nothing but the Kullback-Leibler divergence between $\mathcal{N}_c(\cdot\mid\mathbf{0},\bm{\Omega}_1)$ and $\mathcal{N}_c(\cdot\mid\mathbf{0},\bm{\Omega}_2)$,
and hence
$D_{LD}(\bm{\Omega}_1\mid\bm{\Omega}_2)\geq 0$ with equality if and only if $\bm{\Omega}_1=\bm{\Omega}_2$ (see also Theorem~\ref{thm:LD} in Appendix B).
Consequently, $(\ref{eq:AJDLD})\geq 0$ with equality if and only if $\mathbf{W}^H\mathbf{Q}_1\mathbf{W},\ldots,\mathbf{W}^H\mathbf{Q}_J\mathbf{W}$
are exactly jointly diagonal.

\subsection{Piecewise Stationary Extension of FastFCA}

Here we extend the FastFCA model to piecewise stationary one in line with~\cite{Pham2001,Sawada2018}.
Let us partition the frames into $J$ disjoint subsets called {\it blocks},
each consisting of $B$ consecutive frames, where the $j$th block consists of 
the $((j-1)B+1)$th to the $(jB)$th frames.
Let $b=1,\ldots,B$ be the frame index within each block, and
$\mathbf{x}_{j}^{(b)}\in\mathbb{C}^M$
the mixtures
in the $b$th frame of the $j$th block.
We model it by $\mathbf{x}_{j}^{(b)}=\sum_{n=1}^N\mathbf{c}_{jn}^{(b)}$ with $\mathbf{c}_{jn}^{(b)}$ being the $n$th source image,
and assume that the $N$ source images independently follow 
$p(\mathbf{c}_{jn}^{(b)}\mid\Theta)
=\mathcal{N}_c(\mathbf{c}_{jn}^{(b)}\mid \mathbf{0},h_{ijn}\mathbf{W}^{-H}\bm{\Lambda}_{n}\mathbf{W}^{-1}).$
Note that this distribution  is independent of  $b$, and
hence it is said to be piecewise stationary.
Consequently, on the assumption of sample independence,
a normalized negative log-likelihood to be minimized is obtained as follows:
\begin{align}
&-\frac{1}{B}\ln p\Bigl(\mathbf{x}_{1:J}^{(1:B)}\mid\Theta\Bigr)\notag\\
&\overset{c}{=}-J\ln|\det\mathbf{W}|^2+\sum_{mj}\frac{\langle|y_{mj}|^2\rangle}{\sigma^2_{mj}}
+\sum_{mj}\ln\sigma^2_{mj}\label{eq:negllpws}
\end{align}
in much the same way as (\ref{eq:negll}).
Here, 
$y_{mj}^{(b)}\coloneqq\mathbf{w}_{m}^H\mathbf{x}_{j}^{(b)}$ denotes the $m$th decorrelated mixture, 
$\langle|y_{mj}|^2\rangle\coloneqq \frac{1}{B}\sum_{b=1}^B|y_{mj}^{(b)}|^2$  its short-term power in the $j$th block,
and $\sigma_{mj}^2$ is defined by (\ref{eq:sigmadef}).
Note that (\ref{eq:negllpws}) reduces to the previous FastFCA cost when $B=1$.
Algorithms~\ref{algo:FastFCA} and \ref{algo:FastFCAMM}
 can be used as they are for minimizing (\ref{eq:negllpws}), where the input 
$\widehat{\mathbf{X}}_{j}$ is given by $\widehat{\mathbf{X}}_{j}\coloneqq\langle\mathbf{x}_{j}\mathbf{x}_{j}^H\rangle
= \frac{1}{B}\sum_{b=1}^B \mathbf{x}_{j}^{(b)}\mathbf{x}_{j}^{(b)H}$, instead of (\ref{eq:Xhatdef}).

\subsection{Connection between FastFCA and AJD}
\label{sec:AJDFastFCA}
Now we are ready to show the connection between FastFCA and AJD.
%
(\ref{eq:negllpws}) can be rewritten as follows\footnote{We assume 
$\widehat{\mathbf{X}}_{j}\in S_{++}^M$, 
which is always true in practice provided $B\geq M$. }:
\begin{align}
(\ref{eq:negllpws})&=\sum_{j=1}^J
D_{LD}\Bigl(\mathbf{W}^H\widehat{\mathbf{X}}_{j}\mathbf{W}\mid \ddiag\Bigl(\mathbf{W}^H\widehat{\mathbf{X}}_{j}\mathbf{W}\Bigr)\Bigr)\notag\\
&\phantom{=}+\sum_{m=1}^M\sum_{j=1}^JD_{IS}(\langle |y_{mj}|^2\rangle\mid\sigma^2_{mj}).\label{eq:ldpytha}
\end{align}
%
%
The first term of (\ref{eq:ldpytha}) is an AJD cost, 
which is  (\ref{eq:AJDLD})  with $\mathbf{Q}_j=\widehat{\mathbf{X}}_{j}$ and $\alpha_j=1$, {\it i.e.,} the
cost of time-varying Gaussian ICA\footnote{In the piecewise stationary extension of 
time-varying Gaussian ICA~\cite{Pham2001},
the cost  can also be written as (\ref{eq:ldpytha}),
where $\sigma_{mj}^2$ is a free parameter, instead of (\ref{eq:sigmadef}).
In this case, $\sigma_{mj}^2$ can be eliminated from (\ref{eq:ldpytha}) by 
substituting the minimizer with respect to $\sigma_{mj}^2$:
 $\sigma_{mj}^2=\langle |y_{mj}|^2\rangle$.
This makes the second term in (\ref{eq:ldpytha}) vanish,
whereby remains only the first term,
the cost for AJD.
This result was essentially obtained in~\cite{Pham2001}.}.
The second term is an NMF cost.
Therefore,  FastFCA  can be regarded as a  regularized version of  AJD,
where the regularization term is NMF.
It is interesting to note that two different types of joint diagonalization have been encountered in FastFCA.
Indeed, we started with the JD constraint on the SCMs of all $N$ sources, 
but finally ended up with AJD of the short-term sample covariance matrices in all $J$ blocks.

{
\section{Performance Evaluation}
\label{sec:exp}
To confirm the effectiveness of methods presented in this paper, we conducted experiments. 
As in Fig.~\ref{fig:config}, we measured impulse responses in a real room.
There were $M=3$ microphones at the vertices of an equilateral triangle of side $4$\,cm.
There were $N=2$, $3$, or $4$ sources,
corresponding to the overdetermined, the determined, or the underdetermined case, respectively.
The sources at $70^\circ$ and $150^\circ$ were used when $N=2$, and
those at $70^\circ$, $150^\circ$, and $245^\circ$ when $N=3$.
The reverberation time $RT_{60}$ was varied 
by detaching some of the cushion walls so that $RT_{60}=130, 200, 250, 300, 370$, or $440$\,ms.
The source images were generated by convolving 8-s English speech source signals with impulse responses. 
The mixtures were generated by adding these source images. 
Ten BSS trials with different source signals
were carried out for each number of sources and for each reverberation time.
The number of time-frequency points was $IJ=$127,737.
The other conditions are summarized in Table~\ref{table:cond}.

\begin{table}
\centering
\caption{
Experimental conditions.
}
\label{table:cond}
\begin{tabular}{ll}\hline
sampling frequency&16\,kHz\\
frame length&1024 points (64\,ms)\\
frame shift&512 points (32\,ms)\\
window function&square root of Hann\\
number of EM/MM iterations&20\\\hline
\end{tabular}
\end{table}

FCA and FastFCA were implemented in Matlab (R2013a)\footnote{Sample codes for FastFCA are available online at \url{https://github.com/nttcslab-sp/unifiedUdetDetBSS/}.}
 on an Intel i7-2600 quad-core CPU with a base operation frequency of 3.4\,GHz.
In FastFCA, parameters were initialized by the following procedure:
\begin{enumerate}
\item \textbf{Estimate} $\mathbf{R}_{in}$\textbf{: }Time-frequency masks $\mathcal{M}_{ijn}$ without permutation ambiguity were estimated by Sawada {\it et al.}'s clustering-based BSS method~\cite{Sawada2011TASLP},
and $\mathbf{R}_{in}$ by $\mathbf{R}_{in}\leftarrow\frac{1}{J}\sum_{j=1}^J\mathcal{M}_{ijn}\mathbf{x}_{ij}\mathbf{x}_{ij}^H$ as in~\cite{Souden2013}. 
\item \textbf{Initialize} $\mathbf{W}_i$\textbf{: }$\mathbf{W}_i$ was
initialized by the solution $\mathbf{W}$ to the generalized eigenvalue problem $\mathbf{R}_{i2}\mathbf{W}=\mathbf{R}_{i1}\mathbf{W}\bm{\Delta}$ with $\bm{\Delta}$ being the diagonal matrix composed of the generalized eigenvalues.
\item \textbf{Initialize} $\bm{\Lambda}_{in}$\textbf{: }$\bm{\Lambda}_{in}\leftarrow \ddiag(\mathbf{W}_i^H\mathbf{R}_{in}\mathbf{W}_i)$.
\item \textbf{Initialize} $h_{ijn}$\textbf{: }$h_{ijn}\leftarrow\mathcal{M}_{ijn}\frac{1}{M}\mathbf{x}_{ij}^H\mathbf{R}_{in}^{-1}\mathbf{x}_{ij}$.
\end{enumerate}
In FCA, parameters were initialized by steps 1 and 4 only.

\begin{figure}
\centering
\includegraphics[width=0.8\columnwidth]{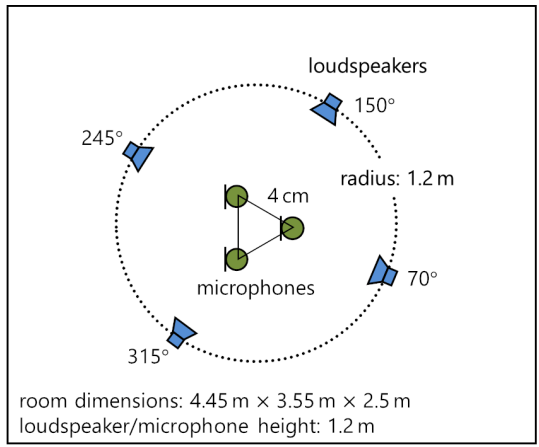}\\
\caption{Room configuration (plan view) in the experiments.}
\label{fig:config}
\end{figure}

Fig.~\ref{fig:RTF} compares FCA and FastFCA in terms of the computational cost measured by the real time factor (RTF), which is defined by the computational time divided by the data length. 
The RTF was averaged over all six reverberation times and ten trials.
In FCA, matrix inversion was performed by the Matlab slash `/'.
We clearly see that FastFCA gave
a significantly lower RTF than FCA.

Figs.~\ref{fig:overdetermined}, \ref{fig:determined}, and \ref{fig:underdetermined} compare the methods in terms of source separation performance measured by the signal-to-distortion ratio (SDR)~\cite{Vincent2006} for $N=2,3$, and $4$, respectively.
The SDR was averaged over all $N$ sources and ten trials.
FastFCA was able to give an SDR comparable to FCA for both the EM and the MM-based algorithms. 
It is interesting to note that, when $RT_{60}=130$\,ms and $N=2,3$,
the MM-based algorithm gave a significantly higher SDR
than the EM algorithm for both FCA and FastFCA.
\begin{figure}
\centering
\includegraphics[width=\columnwidth]{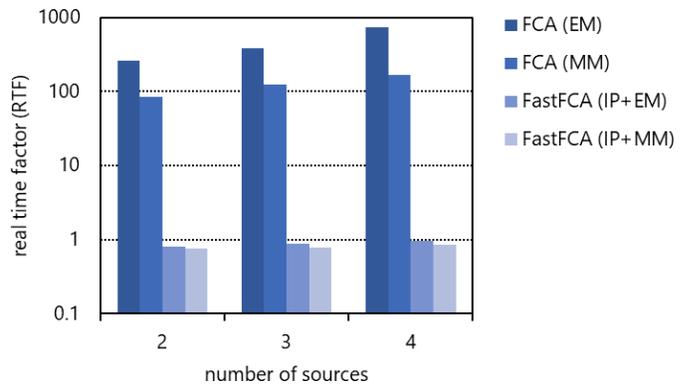}\\
\caption{Computational cost measured by the RTF.}
\label{fig:RTF}
\end{figure}
\begin{figure}
\centering
\includegraphics[width=\columnwidth]{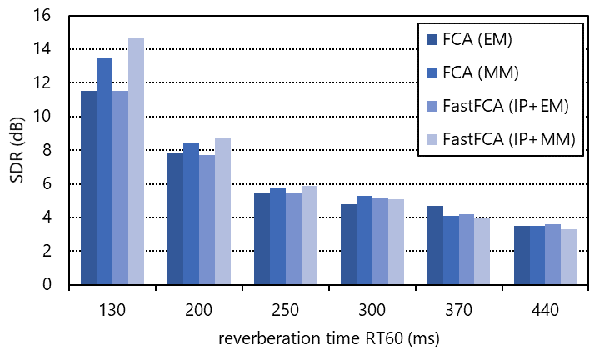}\\
\caption{Source separation performance measured by the SDR for $N=2$ (overdetermined case).}
\label{fig:overdetermined}
\end{figure}
\begin{figure}
\centering
\includegraphics[width=\columnwidth]{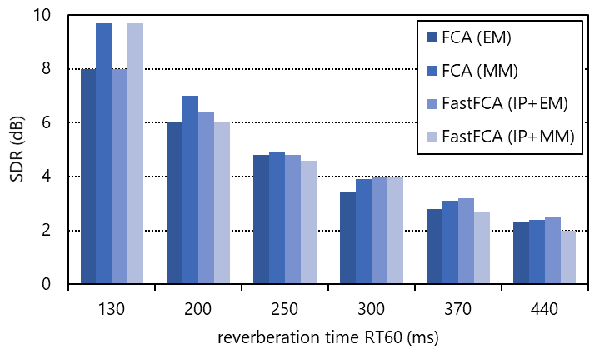}\\
\caption{Source separation performance measured by the SDR for $N=3$ (determined case).}
\label{fig:determined}
\end{figure}

\begin{figure}
\centering
\includegraphics[width=\columnwidth]{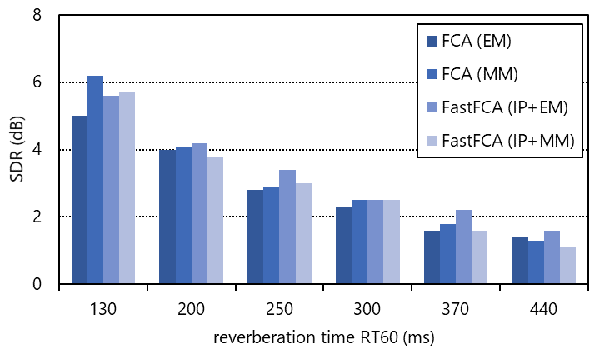}\\
\caption{Source separation performance measured by the SDR for $N=4$ (underdetermined case).}
\label{fig:underdetermined}
\end{figure}

To confirm the validity of the JD constraint, we compared FCA and FastFCA in terms of  the estimation error of the SCM.
The error was measured by the mean square error
between the estimate $\widehat{\mathbf{R}}_{in}$ and the ground truth $\mathbf{R}_{in}$,
\begin{align}
\frac{1}{IN}\sum_{i=1}^I\sum_{n=1}^N\Bigl\|\widehat{\mathbf{R}}_{in}-\mathbf{R}_{in}\Bigr\|_F^2,\label{eq:Rerror}
\end{align}
where $\widehat{\mathbf{R}}_{in}$ and $\mathbf{R}_{in}$ are scaled so that $\tr(\widehat{\mathbf{R}}_{in})=\tr(\mathbf{R}_{in})=1$
and $\|\cdot\|_F$ denotes the Frobenius norm.
In FCA, $\widehat{\mathbf{R}}_{in}$ was obtained by the EM or the MM algorithm.
In FastFCA, $\widehat{\mathbf{R}}_{in}$ was computed by (\ref{eq:parametrizationFastFCA}) from $\mathbf{W}_i$ and $\bm{\Lambda}_{in}$ estimated by the IP+EM or the IP+MM algorithm.
The ground truth $\mathbf{R}_{in}$ was obtained by ML estimation from the source image $\mathbf{c}_{ijn}$.
Table~\ref{table:error} shows the error (\ref{eq:Rerror}), averaged over all ten trials, all six reverberation times, and all $N=2,3,4$.
As seen from the table, FastFCA gave a slightly smaller error than FCA in both the EM and the MM cases.
FastFCA (IP+EM) gave the smallest error of $0.528$
with a $6.2\,\%$ relative reduction compared to the second smallest error of $0.563$ by FCA (EM).
This is likely because FCA has quite many unknown parameters ($IJN$ real numbers + $IN$ Hermitian matrices of order $M$) compared to the available data ($MIJ$ complex STFT coefficients), 
which makes parameter estimation difficult.
FastFCA can mitigate this issue thanks to the JD constraint.
This result clearly shows the validity of the JD constraint.

\begin{table}
\centering
\caption{
Estimation error of $\mathbf{R}_{in}$  measured by (\ref{eq:Rerror}).
}
\label{table:error}
\begin{tabular}{cccc}\hline
FCA (EM)&FCA (MM)&FastFCA (IP+EM)&FastFCA (IP+MM)\\
0.563&
0.634&
$\bm{0.528}$&
0.628\\\hline
\end{tabular}
\end{table}

\section{Conclusion}
\label{sec:conc}
We presented FastFCA, an efficient method for underdetermined audio BSS based on joint diagonalization. We also presented a unified framework for underdetermined and determined audio BSS, which highlights a close theoretical connection beteween FastFCA and other methods. Moreover, we revealed a connection between FastFCA and AJD. The future work includes extensions of FastFCA that deal with initialization robustness, convergence acceleration, and non-Gaussian distributions/divergences.


\appendices
\section{Derivation of the EM Algorithm for FCA}
On the assumption of sample independence and source independence,
$
p(Z\mid\Theta^\prime)
=\prod_{j=1}^J\prod_{n=1}^{N-1}\mathcal{N}_c(\mathbf{c}_{jn}\mid \mathbf{0},h_{jn}^\prime\mathbf{R}_n^\prime)
=\prod_{j=1}^J\mathcal{N}_c\bigl(\widetilde{\mathbf{c}}_{j}\mid \mathbf{0},\widetilde{\mathbf{C}}_j(\Theta^\prime)\bigr),
$
where $\widetilde{\mathbf{c}}_j\coloneqq (\mathbf{c}_{j1}^T,\ldots,\mathbf{c}_{j,N-1}^T)^T$. 
Here, $\widetilde{\mathbf{C}}_j(\Theta^\prime)$ denotes the block diagonal matrix composed of $h_{j1}^\prime\mathbf{R}_1^\prime,\ldots,
h_{j,N-1}^\prime\mathbf{R}_{N-1}^\prime$.
We also have
$
p(X\mid Z,\Theta^\prime)
=\prod_{j=1}^J\mathcal{N}_c(\mathbf{x}_j\mid\mathbf{E}\widetilde{\mathbf{c}}_{j},h_{jN}^\prime\mathbf{R}_N^\prime)
$
with $\mathbf{E}\coloneqq (\underbrace{\mathbf{I},\ldots,\mathbf{I}}_{N-1\text{ times}})$.
From the Bayes theorem, 
$p(Z\mid X,\Theta^\prime)=\prod_{j=1}^J\mathcal{N}_c\bigl(\widetilde{\mathbf{c}}_j\mid \bm{\mu}^{\widetilde{c}}_j,\bm{\Sigma}^{\widetilde{c}}_j\bigr)$
with
$\bm{\mu}^{\widetilde{c}}_j\coloneqq\widetilde{\mathbf{C}}_j(\Theta^\prime)\mathbf{E}^T\mathbf{X}_j(\Theta^\prime)^{-1}\mathbf{x}_j$ and 
$\bm{\Sigma}^{\widetilde{c}}_j\coloneqq\widetilde{\mathbf{C}}_j(\Theta^\prime)-\widetilde{\mathbf{C}}_j(\Theta^\prime)\mathbf{E}^T\mathbf{X}_j(\Theta^\prime)^{-1}\mathbf{E}\widetilde{\mathbf{C}}_j(\Theta^\prime).$
Thus,
\begin{align}
&Q(\Theta\mid\Theta^\prime)\notag\\
&\overset{c}{=}-\sum_{j=1}^J\sum_{n=1}^N\ln\det(h_{jn}\mathbf{R}_n)\\
&\notag\phantom{=}-\sum_{j=1}^J\tr\Bigl\{\widetilde{\mathbf{C}}_j(\Theta)^{-1}\Bigl[
\bm{\mu}^{\widetilde{c}}_j\Bigl(\bm{\mu}^{\widetilde{c}}_j\Bigr)^H+\bm{\Sigma}^{\widetilde{c}}_j\Bigr]\Bigr\}\\
&\notag\phantom{=}-\sum_{j=1}^J\tr\Bigl\{(h_{jN}\mathbf{R}_N)^{-1}\Bigl[\Bigl(\mathbf{x}_j-\mathbf{E}\bm{\mu}^{\widetilde{c}}_j\Bigr)\Bigl(\ldots\Bigr)^H
+\mathbf{E}\bm{\Sigma}^{\widetilde{c}}_j\mathbf{E}^T\Bigr]\Bigr\}\\
&=-\sum_{j=1}^J\sum_{n=1}^N\ln\det(h_{jn}\mathbf{R}_n)
-\sum_{j=1}^J\sum_{n=1}^N\tr[(h_{jn}\mathbf{R}_n)^{-1}\bm{\Psi}_{jn}
],
\end{align}
where $\bigl(\ldots\bigr)^H$ denotes $\bigl(\mathbf{x}_j-\mathbf{E}\bm{\mu}^{\widetilde{c}}_j\bigr)^H$ and $\bm{\Psi}_{jn}\coloneqq
\bm{\mu}^c_{jn}(\bm{\mu}^c_{jn})^H+
\bm{\Sigma}^c_{jn}$.
Here, $
\bm{\mu}^c_{jn}\coloneqq h_{jn}^\prime\mathbf{R}_n^\prime\mathbf{X}_j(\Theta^\prime)^{-1}\mathbf{x}_j$ and $
\bm{\Sigma}^c_{jn}\coloneqq h_{jn}^\prime\mathbf{R}_n^\prime-(h_{jn}^\prime)^2\mathbf{R}_n^\prime\mathbf{X}_j(\Theta^\prime)^{-1}\mathbf{R}_n^\prime$ are
 the parameters of  the marginal posterior of each source image,
$
p(\mathbf{c}_{jn}\mid\mathbf{x}_j,\Theta^\prime)=\mathcal{N}_c(\mathbf{c}_{jn}\mid \bm{\mu}^c_{jn},\bm{\Sigma}^c_{jn}).
$
Therefore, in the E-step, it suffices to update $\bm{\mu}^c_{jn}\in \mathbb{C}^M$ and $\bm{\Sigma}^c_{jn}\in S_{++}^M$,
instead of $\bm{\mu}^{\widetilde{c}}_j\in \mathbb{C}^{M(N-1)}$ and $\bm{\Sigma}^{\widetilde{c}}_j\in S_{++}^{M(N-1)}$.
The update rules  for the M-step in (\ref{eq:FCAEM2}) are obtained by partial differentiation of $Q$.

\section{Derivation of the MM Algorithm for FCA}
First of all, we have the following theorems:
\begin{theorem}
\label{thm:LD}
Let $M$ be a positive integer.
For $\bm{\Omega}_1,\bm{\Omega}_2\in S^M_{++}$, $\tr(\bm{\Omega}_1\bm{\Omega}_2^{-1})-\ln\det(\bm{\Omega}_1\bm{\Omega}_2^{-1})-M\geq 0$ with equality if and only if $\bm{\Omega}_1=\bm{\Omega}_2$.
\end{theorem}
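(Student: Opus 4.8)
The plan is to reduce the matrix inequality to a scalar one by diagonalizing. First I would introduce the congruence-transformed matrix $\mathbf{P}\coloneqq\bm{\Omega}_2^{-1/2}\bm{\Omega}_1\bm{\Omega}_2^{-1/2}$, which is Hermitian positive definite because $\bm{\Omega}_1\in S_{++}^M$ and congruence by the invertible Hermitian matrix $\bm{\Omega}_2^{-1/2}$ preserves positive definiteness. Let $\lambda_1,\ldots,\lambda_M>0$ denote its eigenvalues. The crucial observation is that both terms in the expression depend on $\bm{\Omega}_1,\bm{\Omega}_2$ only through these eigenvalues: by the cyclic invariance of the trace, $\tr(\bm{\Omega}_1\bm{\Omega}_2^{-1})=\tr(\mathbf{P})=\sum_{m=1}^M\lambda_m$, and by multiplicativity of the determinant, $\det(\bm{\Omega}_1\bm{\Omega}_2^{-1})=\det(\mathbf{P})=\prod_{m=1}^M\lambda_m$.

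With this substitution the quantity to be bounded becomes
\begin{align}
\tr(\bm{\Omega}_1\bm{\Omega}_2^{-1})-\ln\det(\bm{\Omega}_1\bm{\Omega}_2^{-1})-M
=\sum_{m=1}^M(\lambda_m-\ln\lambda_m-1).
\end{align}
The proof then rests on the elementary scalar inequality $\lambda-\ln\lambda-1\geq 0$ for all $\lambda>0$, with equality if and only if $\lambda=1$. This I would verify by noting that $f(\lambda)\coloneqq\lambda-\ln\lambda-1$ has derivative $f'(\lambda)=1-1/\lambda$, which is negative on $(0,1)$ and positive on $(1,\infty)$, so $f$ attains its unique minimum $f(1)=0$ at $\lambda=1$. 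Summing over $m$ immediately gives nonnegativity of the whole expression.

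For the equality condition, the sum of nonnegative terms vanishes if and only if each term vanishes, i.e. $\lambda_m=1$ for all $m$. Since $\mathbf{P}$ is Hermitian, this forces $\mathbf{P}=\mathbf{I}$, and undoing the congruence yields $\bm{\Omega}_1=\bm{\Omega}_2^{1/2}\mathbf{I}\,\bm{\Omega}_2^{1/2}=\bm{\Omega}_2$. Conversely, $\bm{\Omega}_1=\bm{\Omega}_2$ clearly gives equality.

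There is no real obstacle here; the only point requiring a little care is the reduction to eigenvalues via the symmetric factor $\bm{\Omega}_2^{-1/2}$, which is what lets me apply the cyclic trace identity and guarantees that $\mathbf{P}$ is genuinely Hermitian (rather than merely similar to a Hermitian matrix), so that the spectral argument and the equality analysis go through cleanly. Everything else is the standard scalar convexity estimate.
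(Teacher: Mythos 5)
Your proof is correct and follows essentially the same route as the paper: both reduce the expression to the generalized eigenvalues of the pair $(\bm{\Omega}_1,\bm{\Omega}_2)$ and invoke the scalar inequality $\lambda-\ln\lambda-1\geq 0$. The only cosmetic difference is that you diagonalize via the Hermitian matrix $\bm{\Omega}_2^{-1/2}\bm{\Omega}_1\bm{\Omega}_2^{-1/2}$, whereas the paper uses simultaneous diagonalization by congruence ($\mathbf{W}^H\bm{\Omega}_1\mathbf{W}=\bm{\Lambda}$, $\mathbf{W}^H\bm{\Omega}_2\mathbf{W}=\mathbf{I}$); these produce the same eigenvalues, and your treatment of the equality case is if anything more explicit than the paper's.
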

\begin{proof}
There exist $\mathbf{W}\in GL(M)$ and diagonal $\bm{\Lambda}\in S_{++}^M$ such that $\mathbf{W}^H\bm{\Omega}_1\mathbf{W}=\bm{\Lambda}$ and $\mathbf{W}^H\bm{\Omega}_2\mathbf{W}=\mathbf{I}$~\cite{Horn1990}. Therefore, $\tr(\bm{\Omega}_1\bm{\Omega}_2^{-1})-\ln\det(\bm{\Omega}_1\bm{\Omega}_2^{-1})-M=\tr(\bm{\Lambda})-\ln\det(\bm{\Lambda})-M=\sum_{m=1}^M
([\bm{\Lambda}]_{mm}-\ln[\bm{\Lambda}]_{mm}-1)\geq 0$. The equality condition is obvious.
\end{proof}
\begin{theorem}\label{thm:jensen}
Let $M,N$ be positive integers.
Suppose $\bm{\Omega}_1,\ldots,\bm{\Omega}_N\in S^M_{++}$, $\bm{\Gamma}_1,\ldots,\bm{\Gamma}_N\in\mathbb{C}^{M\times M}$, and
$\sum_{n=1}^N\bm{\Gamma}_n=\mathbf{I}$. We have $\sum_{n=1}^N\bm{\Gamma}_n^H\bm{\Omega}_n^{-1}\bm{\Gamma}_n\geq 
(\sum_{n=1}^N\bm{\Omega}_n)^{-1}$ with equality if and only if, for all $n\in\{1,\ldots,N\}$, $\bm{\Gamma}_n=\bm{\Omega}_n(\sum_{\nu=1}^N\bm{\Omega}_\nu)^{-1}$.
Here, for Hermitian matrices $\mathbf{X},\mathbf{Y}$ of the same size, $\mathbf{X}\geq \mathbf{Y}\overset{def}{\Longleftrightarrow}\mathbf{X}-\mathbf{Y}$ is positive semidefinite.
\end{theorem}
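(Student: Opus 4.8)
The plan is to prove the inequality by completing the square around the natural candidate minimizer, which simultaneously delivers the equality condition. Write $\bm{\Omega}\coloneqq\sum_{\nu=1}^N\bm{\Omega}_\nu\in S_{++}^M$ and set $\bm{\Gamma}_n^\star\coloneqq\bm{\Omega}_n\bm{\Omega}^{-1}$. First I would check feasibility and the target value: $\sum_{n=1}^N\bm{\Gamma}_n^\star=\bm{\Omega}\bm{\Omega}^{-1}=\mathbf{I}$, and, using that each $\bm{\Omega}_n$ and $\bm{\Omega}$ are Hermitian, $\sum_{n=1}^N(\bm{\Gamma}_n^\star)^H\bm{\Omega}_n^{-1}\bm{\Gamma}_n^\star=\bm{\Omega}^{-1}(\sum_{n=1}^N\bm{\Omega}_n)\bm{\Omega}^{-1}=\bm{\Omega}^{-1}$. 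This isolates $\bm{\Omega}^{-1}$ as the value to be attained at the optimum.

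For an arbitrary feasible family I would write $\bm{\Gamma}_n=\bm{\Gamma}_n^\star+\bm{\Delta}_n$. Because $\sum_n\bm{\Gamma}_n=\sum_n\bm{\Gamma}_n^\star=\mathbf{I}$, the perturbations obey the single constraint $\sum_{n=1}^N\bm{\Delta}_n=\mathbf{O}$. Expanding $\sum_n\bm{\Gamma}_n^H\bm{\Omega}_n^{-1}\bm{\Gamma}_n$ produces the constant term $\bm{\Omega}^{-1}$, two cross terms, and the remainder $\sum_n\bm{\Delta}_n^H\bm{\Omega}_n^{-1}\bm{\Delta}_n$. The key simplification is that the cross terms cancel: since $(\bm{\Gamma}_n^\star)^H\bm{\Omega}_n^{-1}=(\bm{\Omega}_n\bm{\Omega}^{-1})^H\bm{\Omega}_n^{-1}=\bm{\Omega}^{-1}\bm{\Omega}_n\bm{\Omega}_n^{-1}=\bm{\Omega}^{-1}$ is independent of $n$, we get $\sum_n(\bm{\Gamma}_n^\star)^H\bm{\Omega}_n^{-1}\bm{\Delta}_n=\bm{\Omega}^{-1}\sum_n\bm{\Delta}_n=\mathbf{O}$, and its Hermitian conjugate vanishes likewise. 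This yields the exact identity $\sum_n\bm{\Gamma}_n^H\bm{\Omega}_n^{-1}\bm{\Gamma}_n=\bm{\Omega}^{-1}+\sum_n\bm{\Delta}_n^H\bm{\Omega}_n^{-1}\bm{\Delta}_n$.

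The inequality is then immediate, since each $\bm{\Delta}_n^H\bm{\Omega}_n^{-1}\bm{\Delta}_n$ is positive semidefinite ($\bm{\Omega}_n^{-1}$ being positive definite), so the remainder is $\geq\mathbf{O}$. For the equality case I would use that a finite sum of positive semidefinite matrices equals $\mathbf{O}$ if and only if every summand does, and that $\bm{\Delta}_n^H\bm{\Omega}_n^{-1}\bm{\Delta}_n=\mathbf{O}$ forces $\bm{\Delta}_n=\mathbf{O}$: factoring $\bm{\Omega}_n^{-1}=\mathbf{L}_n^H\mathbf{L}_n$ with $\mathbf{L}_n$ invertible, the summand equals $(\mathbf{L}_n\bm{\Delta}_n)^H(\mathbf{L}_n\bm{\Delta}_n)$, which vanishes only when $\mathbf{L}_n\bm{\Delta}_n=\mathbf{O}$, i.e.\ $\bm{\Delta}_n=\mathbf{O}$. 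Hence equality holds exactly when $\bm{\Gamma}_n=\bm{\Omega}_n\bm{\Omega}^{-1}$ for all $n$, as claimed. I do not anticipate a real obstacle; the only delicate point is the bookkeeping that cancels the cross terms, which relies entirely on the Hermiticity of the $\bm{\Omega}_n$ together with the constraint $\sum_n\bm{\Delta}_n=\mathbf{O}$. As an equally short alternative, one may instead observe that each block matrix with diagonal blocks $\bm{\Omega}_n$ and $\bm{\Gamma}_n^H\bm{\Omega}_n^{-1}\bm{\Gamma}_n$ and off-diagonal block $\bm{\Gamma}_n$ is positive semidefinite, sum over $n$ using $\sum_n\bm{\Gamma}_n=\mathbf{I}$, and apply the Schur complement criterion with respect to the positive definite block $\bm{\Omega}$.
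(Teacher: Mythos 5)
Your proposal is correct and is essentially the paper's own argument: the paper proves the theorem by the single completing-the-square identity $\sum_n\bm{\Gamma}_n^H\bm{\Omega}_n^{-1}\bm{\Gamma}_n-(\sum_n\bm{\Omega}_n)^{-1}=\sum_n\bm{\Delta}_n^H\bm{\Omega}_n^{-1}\bm{\Delta}_n$ with $\bm{\Delta}_n=\bm{\Gamma}_n-\bm{\Omega}_n(\sum_\nu\bm{\Omega}_\nu)^{-1}$, which is exactly your expansion with the cross terms cancelled. You merely spell out the cross-term cancellation and the equality case that the paper dismisses as obvious.
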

\begin{proof}
$\sum_{n=1}^N\bm{\Gamma}_n^H\bm{\Omega}_n^{-1}\bm{\Gamma}_n-
(\sum_{n=1}^N\bm{\Omega}_n)^{-1}=\sum_{n=1}^N[\bm{\Gamma}_n-\bm{\Omega}_n(\sum_{\nu=1}^N\bm{\Omega}_\nu)^{-1}]^H\bm{\Omega}_n^{-1}[\bm{\Gamma}_n-\bm{\Omega}_n(\sum_{\nu=1}^N\bm{\Omega}_\nu)^{-1}]\geq \mathbf{O}.$ The equality condition is obvious.
\end{proof}

The above theorems imply that $\mathcal{J}$ in (\ref{eq:negllFCA}) is majorized by an auxiliary function $\mathcal{J}^+$ as follows:
\begin{align}
\mathcal{J}(\Theta)&=\sum_{j=1}^J[\ln\det \mathbf{X}_j(\Theta)-\ln g(\mathbf{x}_j^H\mathbf{X}_j(\Theta)^{-1}\mathbf{x}_j)]\\
&\leq \sum_{j=1}^J\Biggl[\ln\det\bm{\Pi}_j+\sum_{n=1}^Nh_{jn}\tr(\mathbf{R}_n\bm{\Pi}_j^{-1})-M\\
\notag &\phantom{\coloneqq}-\ln g\Biggl(\mathbf{x}_j^H\sum_{n=1}^N\frac{\bm{\Gamma}^H_{jn}\mathbf{R}_n^{-1}\bm{\Gamma}_{jn}}{h_{jn}}\mathbf{x}_j\Biggr)\Biggr]\\
&=:\mathcal{J}^+(\Theta,\Xi).
\end{align} 
Here, $\Xi\coloneqq \{\bm{\Pi}_{1:J},\bm{\Gamma}_{1:J,1:N}\}$ is the set of auxiliary variables, and\footnote{Obviously, the above inequality remains valid for any nonincreasing function $g: \mathbb{R}_+\rightarrow \mathbb{R}_{++}$ ($\mathbb{R}_{+}$: the set of non-negative numbers), corresponding to a subclass of a general complex elliptically symmetric distribution~\cite{Ollila2012}.} $g(\alpha)\coloneqq\exp(-\alpha)$.
Update rules of $\Xi$ can be obtained from the equality conditions in Theorems~\ref{thm:LD} and \ref{thm:jensen} as 
$\bm{\Gamma}_{jn}\leftarrow h_{jn}\mathbf{R}_n(\sum_{\nu=1}^Nh_{j\nu}\mathbf{R}_\nu)^{-1}$ and
$\bm{\Pi}_j\leftarrow\sum_{n=1}^Nh_{jn}\mathbf{R}_n$.
Update rules of $\Theta$ can be obtained by partial differentiation of  $\mathcal{J}^+$ as
$h_{jn}\leftarrow\sqrt{\frac{\mathbf{x}_j^H\bm{\Gamma}_{jn}^H\mathbf{R}_n^{-1}\bm{\Gamma}_{jn}\mathbf{x}_j}{\tr(\mathbf{R}_n\bm{\Pi}_j^{-1})}}$
and $\mathbf{R}_n\leftarrow(\sum_{j=1}^Jh_{jn}\bm{\Pi}_j^{-1})^{-1}\#(\sum_{j=1}^J\frac{\bm{\Gamma}_{jn}\mathbf{x}_j\mathbf{x}_j^H\bm{\Gamma}_{jn}^H}{h_{jn}})$. By eliminating  $\bm{\Gamma}_{jn}$, we have the MM algorithm for FCA in (\ref{eq:updateXMM})--(\ref{eq:updateRMM}).

%


\section{FastMNMF}
\label{sec:fastmnmf}

This appendix describes FastMNMF.
The spectrogram of each source signal is often composed of a few recurrent spectral patterns,
{\it e.g.}, musical notes played by an instrument. This motivates us to
 model the spectrogram of the $n$th source signal, $(h_{ijn})_{ij}$, by the product of two non-negative
matrices: 
$\mathbf{T}_n\mathbf{V}_n.$
Here, $\mathbf{T}_n$ is composed of
columns that model
the recurrent spectral patterns, and $\mathbf{V}_n$ rows that model their temporal activations.
In scalar form, we can write
\begin{align}
h_{ijn}=\sum_{k=1}^Kt_{ikn}v_{jkn},\label{eq:NMFconst2}
\end{align}
where $t_{ikn}\coloneqq [\mathbf{T}_n]_{ik}$ and $v_{jkn}\coloneqq [\mathbf{V}_n]_{kj}$.

In FastMNMF, each source image is modeled by (\ref{eq:FastFCAstosigmodel}) as in FastFCA, but
 $h_{ijn}$ is not a free parameter any more but parametrized as in (\ref{eq:NMFconst2}).
As in FastFCA, the negative log-likelihood to be minimized is given by
\begin{align}
&-\ln p(\mathbf{x}_{1:I,1:J}\mid\Theta)\notag\\
&\overset{c}{=}-J\sum_{i=1}^I\ln|\det\mathbf{W}_i|^2+\sum_{mij}\frac{|y_{mij}|^2}{\sigma^2_{mij}}
+\sum_{mij}\ln\sigma^2_{mij}.\label{eq:FastMNMFcost}
\end{align}
Here, $\sigma^2_{mij}$ is given by (\ref{eq:sigmadef}) with $h_{ijn}$ parametrized as in (\ref{eq:NMFconst2}), 
$\Theta\coloneqq\{\mathbf{W}_{1:I},\bm{\Lambda}_{1:I,1:N}, \mathbf{T}_{1:N}, \mathbf{V}_{1:N}\}$,
and $\sum_{mij}$ is a shorthand notation for $\sum_{m=1}^M\sum_{i=1}^I\sum_{j=1}^J$.
The first two terms of (\ref{eq:FastMNMFcost}) can be regarded as a cost of time-varying Gaussian ICA as before, which is relevant to updating $\mathbf{W}_i$. On the other hand, 
the last two terms of (\ref{eq:FastMNMFcost}) can be regarded as a cost of tensor decomposition based on the Itakura-Saito divergence, which is relevant to updating $\bm{\Lambda}_{in}$, $\mathbf{T}_{1:N}$, and $\mathbf{V}_{1:N}$. Consequently, we can minimize (\ref{eq:FastMNMFcost}) efficiently by alternately applying ICA updates for $\mathbf{W}_i$ and tensor decomposition updates for $\bm{\Lambda}_{in}$, $\mathbf{T}_n$, and $\mathbf{V}_n$. For example, $\mathbf{W}_i$ can be updated by IP,
and $\bm{\Lambda}_{in}$, $\mathbf{T}_n$, and $\mathbf{V}_n$ by the MM algorithm. 
See~\cite{Ito2019ICASSP} for more details.

\end{document}